\newcommand*{\Cdot}{\raisebox{-0.45ex}{\scalebox{1.15}{$\cdot$}}}
\DeclareMathOperator*{\argmax}{arg\,max}
\begin{document}
% Title portion. Note the short title for running heads 
\title{Spatio-Temporal Modeling of Users' Check-ins in Location-Based Social Networks}
%A Spatio-Temporal Model for Check-ins of Users in Location-Based Social Networks

\author{Ali Zarezade}
\author{Sina Jafarzadeh}
\author{Hamid R. Rabiee}
%\orcid{1234-5678-9012-3456}
\affiliation{%
  \institution{Sharif University of Technology }
  \streetaddress{Azadi Ave.}
  \city{Tehran}
%  \state{VA}
%  \postcode{23185}
  \country{Iran}
  }

\begin{abstract}
Social networks are getting closer to our real physical world. 
People share the exact location and time of their check-ins and are influenced by their friends. Modeling the spatio-temporal behavior of users in social networks is of great importance for predicting the future behavior of users, controlling the users' movements, and finding the latent influence network.
It is observed that users have periodic patterns in their movements. Also, they are influenced by the locations that their close friends recently visited.
Leveraging these two observations, we propose a probabilistic model based on a doubly stochastic point process with a periodic decaying kernel for the time of check-ins and a time-varying multinomial distribution for the location of check-ins of users in the location-based social networks.
We learn the model parameters using an efficient EM algorithm,  which distributes over the users. 
Experiments on synthetic and real data gathered from Foursquare show that the proposed inference algorithm learns the parameters efficiently and our model outperforms the other alternatives in the prediction of time and location of check-ins.
\end{abstract}

%
% The code below should be generated by the tool at
% http://dl.acm.org/ccs.cfm
% Please copy and paste the code instead of the example below. 
%
\begin{CCSXML}
<ccs2012>
<concept>
<concept_id>10002950.10003648</concept_id>
<concept_desc>Mathematics of computing~Probability and statistics</concept_desc>
<concept_significance>500</concept_significance>
</concept>
<concept>
<concept_id>10003120.10003130.10003131.10003292</concept_id>
<concept_desc>Human-centered computing~Social networks</concept_desc>
<concept_significance>500</concept_significance>
</concept>
<concept>
<concept_id>10010147.10010257</concept_id>
<concept_desc>Computing methodologies~Machine learning</concept_desc>
<concept_significance>500</concept_significance>
</concept>
</ccs2012>
\end{CCSXML}

\ccsdesc[500]{Mathematics of computing~Probability and statistics}
\ccsdesc[500]{Human-centered computing~Social networks}
\ccsdesc[500]{Computing methodologies~Machine learning}
%
% End generated code
%

% We no longer use \terms command
%\terms{Design, Algorithms, Performance}

\keywords{
Spatio-temporal, location-based social networks, foursquare,  check-in, stochastic point process, influence network,  periodic pattern, probabilistic model, EM algorithm}

\thanks{
This work is supported by ICT Innovation Center, Department of Computer Engineering, Sharif University of Technology, Tehran, Iran
% \\
%              Tel.: +9821-66166633\\
%              Fax: +9821-66047662\\
%              \email{rabiee@sharif.edu}           %  \\
%%             \emph{Present address:} of F. Author  %  if needed
%         %  \and
%           %S. Author \at
%              %second address
}

\maketitle

% The default list of authors is too long for headers}
\renewcommand{\shortauthors}{G. Zhou et al.}

\section{Introduction}
%-------------
%The importance of the analysis of location based social networks.
% shortcoming and drawbacks of prior studies and brief review
%-------------
The advances in location-acquisition techniques and the proliferation of mobile devices have generated an enormous amount of spatial and temporal data of users activities \cite{zheng2011location}.
People can upload a geotagged video, photo or text to social networks like Facebook and Twitter, share their present location on Foursquare or share their travel route using GPS trajectories to GeoLife \cite{zheng2010geolife}.
A considerable amount of this spatio-temporal data is generated by the activity of users in location-based social networks (LBSN). 
In a typical LBSN, like Foursquare, users share the time and geolocation of their check-ins, comment about it, or unlock badges by exploring new venues. 

Many techniques have been proposed for processing, managing, and mining the trajectory data in the past decade \cite{zheng2015trajectory}. Several other studies try to leverage the spatial data in recommender systems \cite{bao2014recommendations}.  However, a few works have attempted to model the spatio-temporal behavior of users in LBSNs \cite{cho2013latent,cho2011friendship}. Given the history of users' check-ins, the goal is to predict the time and location of each user's check-in utilizing a model.
This model can also be used to find the influence network between users which made up of their check-ins, detect the influential users and popular locations, predict the peak hours of a restaurant, recommend a location, and even control the movement of users. 

%-------------
%In this paper we propose a ...
%Our contributions are as follows...
%-------------
In this paper, we propose a probabilistic spatio-temporal generative model for the check-ins of users in location-based social networks, which can be used in predicting the future check-ins of the users, and discovering the latent influence network. 
People usually have periodic patterns in their movements  \cite{williams2013periodic,cho2011friendship,li2010mining}. For example, a typical user may check into her office in the morning and to a nearby restaurant at noon then return home and repeat this behavior in the following days. 
We model the time of check-ins of each user with a novel periodic decaying doubly stochastic point process which leverages the periodicity in the movements of users and can also capture any drift in their patterns.
To model the location of check-ins we use the fact that users in social media are influenced by the activities of their friends \cite{zarezade2015correlated,Farajtabar2014,gomez2015estimating}. 
If many of your close friends have checked into a specific restaurant recently, then there is a high probability that you select that restaurant, next time.
We model the location of check-ins using a time-varying multinomial distribution.
In summary, we propose:
\begin{itemize}
    \item a periodic point process for modeling the time of users' check-ins, which captures the periodic behavior in the movement of users,
    \item a time-varying multinomial distribution for modeling the location of users' check-ins, which incorporates the mutually-exciting effect of the friends' history of check-ins,
    \item a scalable inference algorithm based on the EM algorithm to find the model parameters, which is distributed over users,
    \item a compelling dataset of Foursquare users' check-ins, curated from 12000 active users during three months in the year 2015.
\end{itemize}

\section{Prior Works}
Modeling information diffusion in social networks has attracted a lot of attentions in recent years \cite{Rodriguez2011,Farajtabar2014,gomez2016influence,du2016recurrent}. 
Given the times that users have adopted to a contagion (information, behavior, or meme), the problem is to model the time and the user of next adoption, \textit{i.e.}, predict the next event.
Early methods \cite{Rodriguez2011,du2012learning,Rodriguez2013} studied information diffusion using a pair-wise probability distribution for each link from node $j$ to $i$, which is the probability that node $i$ generates an event in time $t_i$ due to the event of node $j$ at time $t_j$. These methods overlook the external effects on the generation of events. In addition, they assume that each node adopts a contagion at most once, \textit{i.e.}, events are not recurrent. 
These issues were later addressed in 
\cite{iwata2013discovering,cho2013latent,yang2013mixture,yuan2013and,linderman2014discovering,valera2015,tran2015netcodec,he2015hawkestopic,gomez2016influence}, 
which use point processes for the modeling of events.
In \cite{yang2013mixture,Rodriguez2013,iwata2013discovering,linderman2014discovering}, cascades are assumed to be independent and are modeled by a special point process, called Hawkes \cite{Hawkes1971}. The independence assumption is removed in \cite{valera2015,zarezade2015correlated}, they tried to model the correlation between multiple competing or cooperating cascades.
Other studies  \cite{yang2013mixture,he2015hawkestopic,tran2015netcodec,gui2014modeling,hosseini2016hnp3}, use the additional information of the diffusion network such as topic of tweets or the community structure to better model the influence network.
Most of the previous works studied the information diffusion on  microblogging networks like Twitter, whereas we try to model the time and location of users' check-ins in the location-based networks like Foursquare. 

The prior works in location-based social networks can be categorized into three groups: location recommendation, trajectory mining and location prediction.
The main approaches in location recommendation systems \cite{bao2014recommendations} are: content-based which uses data from a user's profile and the features of locations \cite{ying2010mining,ye2011semantic,xiao2014inferring,postel2013point}; link-based, which applies link analysis models like PageRank to identify the experienced users and interesting locations \cite{zheng2009mining,cao2010mining,yoon2010smart,scellato2011exploiting,liu2013learning}; and collaborative filtering which infers  users' preferences from their historical behavior, like the location history \cite{xiao2014inferring,levandoski2012lars,yuan2013time,postel2013point,li2016potential}.
In trajectory data mining, the source of data is usually generated by the GPS.
% and may be active (uses upload their GPS trajectories) or passive (mobile cell tower or GPS data is uploaded unintentionally). 
 These works include; trajectory pattern mining to find the next location of an individual \cite{cho2011friendship,tang2012discovery,zheng2014online,lichman2014modeling}, anomaly detection to detect unexpected movement patterns  \cite{lee2008trajectory,liu2014fraud}, and trajectory classification to differentiate between trajectories of different states, such as motions, transportation modes, and human activities \cite{zheng2008learning}. A comprehensive review of these methods can be found in the recent survey \cite{zheng2015trajectory}.
We also discriminate our work from location recommendation and trajectory mining methods, because our goal is to model the check-ins of users not to recommend a location or to find the trajectory patterns of users with the position data of their routes.
In location prediction, the goal is to predict the next location, given the user's profile data and the history of check-ins  \cite{gomes2013will,yuan2013and,malmi2013foursquare,liu2016predicting}.
But these methods do not consider; the relation between friends (using the influence matrix), aging effect in the history of checkins (using decaying kernel), exogenous effects on users' decisions, and periodicity in users' movement patterns.

The most similar works to ours are: 
\cite{cho2013latent}, which propose a spatio-temporal model for the interactions between a pair of users, but we model the check-ins of each user not the pair-wise interactions; and \cite{liu2016predicting} which propose a discriminative method to predict the location of next check-in, but we propose a generative model for the location and time of check-ins.

\section{Preliminaries}
To model the time of occurrences of a phenomenon, which are called events, we can use point processes on the real line. The phenomena can be, an earthquake \cite{ogata1988statistical}, a viral disease \cite{Barabasi2015} or the spread of information over a network \cite{Rodriguez2013}. 
The sequence of events, as defined below, is the realization of a point process.
%The realization of the point process, where defined below, is a sequence of events. 
%
\begin{definition}[Point Process]
	\textit{
	Let $\{ t_i \}_{i\in\mathbb{N}}$ be a sequence of non-negative random variables such that $\forall i \in \mathbb{N}, \, t_i<t_{i+1}$, then we call $\{ t_i \}_{i\in\mathbb{N}}$ a point process on $\mathbb{R}$, and $\mathcal{F}_t = \{t_i \,|\, i\in\mathbb{N}, t_i<t \}$ as its history or filtration.
	}
\end{definition}
There are different equivalent descriptions for the point processes such as;  sequence of points $\{t_i\}$, sequence of intervals (duration process) $\delta t_i$, counting process $N(t)$, or intensity process $\lambda(t)$ \cite{Daley2002}. In the following, we briefly  explain each definition. 

The counting process $N(t)$ associated with the point process $\{t_i\}_{i \in \mathbb{N}}$, counts the number of events occurred before time $t$, \textit{i.e.},
%\begin{align}
$N(t) = \sum_{i \in \mathbb{N}} \mathbb{I}(t_i<t)$, 
%\end{align}
where $\mathbb{I}(\cdot)$ is the indicator function\footnote{
The indicator function $\mathbb{I}(x \in A)$ is $1$ if $x \in A$, and is $0$  otherwise.}. The duration process $\delta t_i$ associated with the point process $\{t_i\}_{i \in \mathbb{N}}$ is defined as $\forall i \in \mathbb{N},\, \delta t_i = t_i -t_{i-1}$. Finally, the intensity process $\lambda(t)$ is defined as the expected number of events per units of time, which generally depends on the history:
\begin{align*}
	\lambda(t|\mathcal{F}_t) &= \lim_{dt \rightarrow 0} \frac{1}{dt} \mathbb{E} \left[ N(t,t+dt] \,\vert\, \mathcal{F}_t \right] \\
	&= \lim_{dt \rightarrow 0} \frac{1}{dt} \text{Pr}[N(t,t+dt] > 0 \,|\, \mathcal{F}_t]
\end{align*}
where $N(t,s] := N(s)-N(t)$. To evaluate the likelihood of a sequence of events, $f(t_1,t_2,\ldots,t_n)$, we can use the chain rule of probability, $f(t_1,t_2,\cdots,t_n)=\prod_i f(t_i|t_{1:i-1})$. Therefore, it suffice to describe only the conditionals, which are abbreviated to $f^*(t)$. According to the definition of point processes, we can write the probability of occurring the $(n+1)$'th event in time $t$ as:
\begin{align*}
	f^*(t) \, dt = \text{Pr}\left\{N(t_n,t]=0, N(t,t+dt]=1 \,|\, t_{1:n}\right\}.
\end{align*}
If we divide both sides of the above equation by $1-F^*(t)$, where $F^*(\cdot)$ is the cdf of $f^*(\cdot)$, then in the limit as $dt \rightarrow 0$, we have:
\begin{align}
\frac{f^*(t) \, dt}{1-F^*(t)} &= \frac{\text{Pr}\left\{N(t_n,t]=0, N(t,t+dt]=1 \,|\, t_{1:n} \right\}} {\text{Pr}\left\{N(t_n,t]=0 \,|\, t_{1:n} \right\}} \nonumber \\
&=\text{Pr}\left\{N(t,t+dt]=1 \,|\, t_{1:n}, N(t_n,t]=0\right\} \nonumber \\
&=\text{Pr}\left\{N(t,t+dt]>0 \,|\, \mathcal{F}_{t}\right\} \nonumber
\end{align}
Therefore,  according to the definition of intensity, we find the relation between conditional distribution of the time of events and the intensity function as:
\begin{align}
	 \lambda^*(t) = \frac{f^*(t)}{1-F^*(t)}
\end{align}
where we use ${}^*$ superscript to show that a function is dependent on the history.
We can also express the relation of $\lambda^*(t)$ and $f^*(t)$ in the reverse direction \cite{Aalen2008}:
\begin{align}
f^*(t) &= \lambda^*(t) \exp\left(-\int_{t_n}^{t} \lambda^*(s) ds\right)
\end{align}
Now, the cdf can be easily evaluated:
\begin{align}
F^*(t) &= 1- \exp\left(-\int_{t_n}^{t} \lambda^*(s) ds\right).
\end{align}
%Another concept in temporal point processes is the survival function, $S^*(t)=1-F^*(t)$, which is the probability that no event occurs after the last event in $t_n$ until $t$.
A point process is usually defined by specifying its conditional distribution $f^*(t)$ or equivalently its intensity $\lambda^*(t)$. In the simplest case, the intervals $\delta t_i$ are assumed to be $i.i.d.$,  therefore the process is memoryless, and hence  $\lambda^*(t)=\lambda(t)$. The Cox process \cite{cox1980point} is a doubly stochastic point processes, and conditioned on the intensity is a Poisson process \cite{Kingman1992}. Hawkes process \cite{Hawkes1971} is a special type of Cox process, where the intensity is expressed by the history as:
\begin{align}
	\lambda^*(t) = \mu + \int_{-\infty}^{t} \phi(t-\tau) \, dN(\tau) 
	= \mu + \sum_{i=1}^{|\mathcal{F}_t|} \phi(t-t_i)  
\end{align}
where $\phi(t)$ is the kernel of the Hawkes process that defines the effect of past events on the current intensity, and $\mu$ is the base intensity. For example, the exponential kernel $\phi(t) = \exp(-t)$, is used for the modeling of self-exciting events like earthquake \cite{ogata1988statistical}. 
In general, we have a multivariate process with a counting process vector $\bm{N}(t) = [N_1(t),\cdots,N_n(t)]^T$ and an associated intensity vector $\bm{\lambda}^*(t) = [\lambda^*_1(t),\cdots,\lambda^*_n(t)]^T$ defined as:
 \begin{align}
 	\bm{\lambda}^*(t) = \bm{\mu} + \bm{A} \int_{-\infty}^{t} \Phi(t-\tau) \, d\bm{N}(\tau)
 \end{align} 
where $\Phi(t)$ is the matrix of mutual kernels, \textit{i.e.}, $\Phi_{ij}(t)$ models the effect of events of counting process $N_j(t)$ on $N_i(t)$, $\bm{\mu} = [\mu_1,\cdots,\mu_n]^T$ is the base intensity, and $\bm{A}=[\alpha_{ij}]$ is a matrix of mutual-excitation kernels. Often, the point process carries other information than the time of events, which is called mark. For example, the strength of an earthquake can be considered as a mark. The mark $m$, often a subset of $\mathbb{N}$ or $\mathbb{R}$, is associated with each event through the conditional mark probability function $f^*(m|t)$:
\begin{align}
	\lambda^*(t,m) = \lambda^*(t) \, f^*(m|t)
\end{align}
The mutually-exciting property of the Hawkes process makes it a common modeling tool in a variety of applications such as seismology, neurophysiology, epidemiology, reliability, and social network analysis \cite{Farajtabar2015,valera2015,Rodriguez2013,Rodriguez2011}.
\section{Problem Definition} %Problem Statement
% Define the data
% Define the goal of paper
%
Consider a directed network $\mathcal{G}=(\mathcal{V},\mathcal{E})$, with $\left\vert\mathcal{V}\right\vert=N$ users and $L$ locations in $C$ different categories. Each user can check-in to a location and influence her neighbors. We define a check-in as a 4-tuple $(t,u,c,l)$, which shows the time $t$ that user $u$ check-in to location $l$ with category $c$. 
We observe the sequence of all check-ins in the network $\mathcal{G}$, in the time interval $[0,T]$. The observation $\mathcal{D} = \{(t_i, u_i, c_i, l_i)\}_{i=1}^{K}$, is composed of user's check-ins where $t_i \in [0,T]$, $u_i \in  \mathcal{V}$, $c_i \in \{1,2,\ldots,C\}$ and $l_i \in \{\phi_1,\phi_2,\ldots,\phi_L\}$, that $\phi_i$ can be the id or geographical coordinate of a location.
We use the following notation for the history of check-ins of user $u$ in location $l$ with category $c$ up to time $t$:
\begin{align*}
    \mathcal{D}_{ucl}(t) &= \left \{ (t_i,u_i,c_i,l_i) \in \mathcal{D} \,\vert\, t_i<t, u_i=u, c_i=c, l_i=\phi_l \right\}
\end{align*}
Moreover, we use the dot notation to represent the union over the dotted variable, \textit{e.g.}, $\mathcal{D}_{u\Cdot\Cdot}(t)$ represents the events of user $u$, before time $t$, in any location with any category. Moreover, $\mathcal{D}_{\bar{u}c\Cdot}(t)$ represents the events of all users except $u$, before the time $t$, in any location  with category $c$.

%TODO: We can state our assumptions!?
Given this observations, we want to infer the latent influence network, and model the spatio-temporal behavior of users in the location-based social networks like Foursquare. In other words, we want to model the location and time of the next check-in of a user, by observing the history of the user and her friends. 
In this paper, we assume that users have a periodic pattern in the time of their check-ins, and are influenced by the behavior of their friends.
Therefore, we model the time of check-ins by a periodic point process 
which incorporates the periodic pattern in the users' movements, 
and the location of check-ins by a time-dependent multinomial distribution which incorporates the mutually exciting effect of friends.
%all of these assumptions validated by the statistics and empirical analysis on the real data, as it is shown in the experiments section
\section{Proposed Method}
%In this section we briefly review essential background on stochastic point processes, then state the problem in next subsection. Finally the proposed method including the temporal and spatial model and the inference will bring forward.
%****************************************************************
\begin{table}
\centering
\caption{List of symbols.}
\label{tbl:notation}
\begin{tabular}{c|l} 
	\textbf{Symbol} & \textbf{Description} \\ 
	\hline
%	$N$ & Number of users \\	
%	$K$	& Number of observed events \\
%	$L$ & Number of locations\\
%	$C$ & Number of categories \\
	$\phi_l$ & Identity of the $l$'th location  \\ %, $k=1 \cdots K$
%	$L_u^c(t)$ & Number of locations with category $c$ visited by user $u$ till time $t$ \\
%	$\psi_{ucl}$ &  $l$'th location with category $c$ of user $u$, where $\psi_{ucl} \in \{\phi_1,\phi_2,\ldots,\phi_L\}$\\ 
%%	$n^u_{mt}$	& Number of check-ins in location $t$ with category $m$ of user $u$ \\
%	$\ell_{ucl}$	& Index of $l$'th location with category $c$ of user $u$, where $\ell_{ucl} \in \{1,2,\ldots,L\}$\\
	$\beta_{u}$	& Temporal kernel parameter of user $u$ \\	
	$\mu_{uc}$	& Base temporal intensity of user $u$ in category $c$ \\
	$\alpha_{vu}$	& The influence of users $v$ on $u$  \\	
	$\eta_{uc}$	& Tendency of user $u$ to explores  new locations with category $c$ \\			
	$w_{ucl}$	& Weight of location $l$ with category $c$ for user $u$\\
	$m_{cl}$	& Overall weight of location $l$ with category $c$
\end{tabular}
\end{table}

%****************************************************************
\subsection{Modeling the Time of Check-ins}
%{Self-exciting Periodic Point Process}
%In most studies about users' movement, it is observed that the time of check-ins has a periodic pattern \cite{cho2011friendship,williams2013periodic}.
In every working day, a user may check-in to her office in the morning then go to a restaurant at noon, and also have a weekly football practice program. 
By observing the history of the time of check-ins of a user, if she repeats some patterns recently (within several days), for example take a walk every afternoon, then it is more likely to repeat this pattern shortly in the upcoming days at  approximately the same time. It means, there is a periodicity in the users' behaviors. 
Moreover, there maybe also a drift or an addition of a new activity in the user's behavior, for example, the working hour of her office may change or there may be a new weekly social gathering. Therefore, we need a periodic point process to model the time of user's check-ins, which can also adapt to the new users' check-ins. 
This is in contrast to the self-exciting nature of the Hawkes process, which is used to model the diffusion of information over a network \cite{Rodriguez2011,Rodriguez2013,yang2013mixture}.
%In the early work \cite{?} a simple parametric form for the intensity function is used to produce a periodic point process with a known frequency.
%\begin{align*}
%\lambda(t) = A \exp(\rho \cos(\omega t + \phi))
%\end{align*}
%In \cite{?}, an estimator for the unknown frequency $\omega$ in the above intensity is proposed, which approximates the maximum likelihood solution over a certain frequency range. In \cite{} the intensity function expressed as a function of two periodic functions, one function for long-term and the other for short-term periodicity.  
% This model extended in \cite{?} to a model with the sum of multiple sinusoidal functions with different frequencies plus a baseline where the parameters found using an approximate maximum likelihood estimation. 
%\begin{align*}
%	\lambda(t) = \sum_{i=1}^{n} A_i \cos(\omega_i t + \phi_i) + B
%\end{align*}
%Another line of work is to find non-parametric estimator for the intensity function. They consider a general form 
%%like $\lambda = \lambda_C \lambda_S $ where  
%for the intensity and  find an estimate for it in any time $t$ using the history of process up to $t$. 

We propose a doubly stochastic point process which is periodic, and also has a diminishing property that enables the process to change its periodic pattern and adapt to the new behaviors. 
The proposed process, is composed of a Poisson process with the base intensity $\mu$, where each event $t_i$ of this process triggers a Poisson process with the following intensity:
\begin{align}
	\lambda_{t_i}(t) = \sum_{k=1}^{\infty} h(t-t_i-k\tau) \, g(k)
\end{align} 
where $h(t)$ is the kernel of the process, $g(k)$ is a decreasing function to diminish the intensity in the future periods, and the hyper-parameter $\tau$ is the period. This intensity is illustrated in Fig. \ref{fig:periodicHawkes}. The self-exciting property of the Hawkes process can be observed from its exponentially decaying kernel in Fig. \ref{fig:periodicHawkes}. In the Hawkes process when an event occurs, there is a high probability to have events just after it, and this probability decreases exponentially afterward. But in the proposed  process, there is a high probability to have events in the upcoming periods and this probability also decreases exponentially. 
\begin{figure}
\centering
\includegraphics[width=0.4\textwidth]{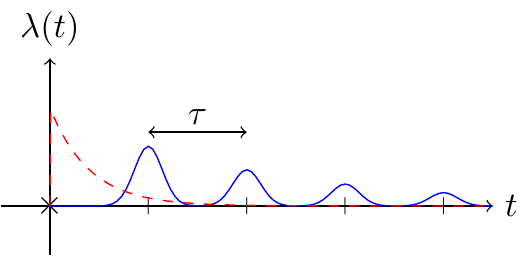}
\caption{An event at time $t=0$ triggers a poisson process. The solid curve shows the intensity of the proposed periodic point process with a Gaussian kernel and period $\tau$, and the dashed curve shows a Hawkes process with an  exponential decaying kernel.}
\label{fig:periodicHawkes}
\end{figure}	

According to the superposition theorem \cite{Kingman1992}, the intensity of the proposed process can be written as follows:
\begin{align}
	\lambda^*(t) = \mu + \sum_{i=1}^{|\mathcal{F}_t|} \lambda_{t_i}(t) = \mu + \sum_{i=1}^{|\mathcal{F}_t|} \sum_{k=1}^{\infty} h(t-t_i-k\tau) \, g(k)
\end{align}
To preserve the locality in time, the kernel $h(t)$ should have a peak at $t=0$ and decay to zero in both sides when $t \rightarrow \pm \infty$. For example, the Gaussian kernel, $h(t) = \exp(-t^2/2\sigma^2)$ meets this requirements. This model has three main features:
%%
%\begin{figure}
%\centering
%\includegraphics[width=0.45\textwidth]{}
%\includegraphics[width=0.45\textwidth]{}
%\caption{Comparison of a Hawkes (\textit{left}) and a periodic (\textit{right}) point process. The intensities are schematically drawn for four periods, and the events in each period and their resulting intensity have a unique color. In the Hawkes process events are clustered but not periodic, and the effect of each event decreases exponentially. In the periodic model, for example, the first two events add a periodic decaying function (blue curve) to the intensity function.}
%\label{fig:hp-ppp}
%\end{figure}	
%%
\begin{enumerate}
	\item \textit{Periodic Nature}. When an event occurs in time $s$, then the intensity of events around this time in the upcoming periods, $s+k\tau$, would increase.
	\item \textit{Temporal Locality}. The intensity is high around the peak of the kernel and drops rapidly in both sides.
	\item \textit{Adaptability}. The peak of the kernel decreases by the increase of $k$, 
%because of decreasing function $g(k)$
so the process can adopt its intensity to any new periodic patterns.
\end{enumerate}
If we use a truncated Gaussian kernel like $h(t) = \exp(-t^2/2\sigma^2) \, \mathbb{I}(-\tau/2 \leq t \leq \tau/2)$, then we can substantially reduce the complexity of the intensity function. With this kernel we can show that:
\begin{align}
	\lambda^*(t) = \mu + \sum_{i=1}^{|\mathcal{F}_t|} h(t-t_i-k_i\tau) \, g(k_i)
\end{align}
where $k_i = \lfloor \frac{t-t_i}{\tau} \rfloor$ is the period number of which the event in $t_i$ affects on the current intensity. 
%In Fig. \ref{fig:hp-ppp}, the periodic model is illustrated for a  Gaussian kernel and compared to a self-exciting Hawkes model.
We propose the following point process for the time of check-ins of  user $u$ in any location with category $c$:
\begin{align}\label{equ:spt-t-m}
	\lambda_{u}(t,c) = \mu_{uc} +  \sum_{i=1}^{\vert\mathcal{D}_{uc\Cdot}(t)\vert}  \beta_{u} \exp\left[-\frac{(t-t_i-k_i\tau)^2}{2\sigma^2}\right] \exp(-k_i)
\end{align}
%\begin{align}\label{equ:spt-t}
%	\lambda_{u}(t) = \mu_{u} +  \sum_{i=1}^{\vert\mathcal{D}_{u\Cdot\Cdot}(t)\vert}  \beta_{u} \exp\left[-\frac{(t-t_i-k_i\tau)^2}{2\sigma^2}\right] \exp(-k_i)
%\end{align}
The first term, $\mu_{uc}$ is the base intensity that models the external effect on user $u$ to generates check-ins with category $c$, the second term is the periodic effect of the history, $\beta_u$ is the kernel parameter, and $\tau,\,\sigma $ are hyper-parameters. All parameters of the model are listed in Table \ref{tbl:notation}.
The intuition of this model is that, if a user check-ins frequently, for example in the ''restaurant`` category at noon, then with high probability, she will checks in a restaurant at noon in the next day. 

\subsection{Modeling the Location of Check-ins}
In this section, we propose a model for the location of users' check-ins, given the history of check-ins. 
We use the fact that, users in social networks are influenced by the behavior of their neighbors. 
Let denote the weight of location $l$ with category $c$ for user $u$ as:
\begin{align}\label{equ:spt-weight-w}
w_{ucl} = \sum_{i=1}^{\vert\mathcal{D}_{\Cdot cl}(t)\vert} \alpha_{u_iu} \exp(-(t-t_i))
\end{align}
which incorporates $\alpha_{u_iu}$, the influence of user $u_i$ on $u$, and the time of check-ins with an exponentially decaying kernel. This kernel diminishes the effect of far past check-ins, so the model can adopt to any new behaviors of the users' check-ins. Therefore, a location which checked in recently with many or even few but influential friends would have high weight. We also define a weight for the popularity of a location $l$ with category $c$ from the perspective of all users: 
\begin{align}\label{equ:spt-weight-m}
m_{cl} = \sum_{i=1}^{\vert\mathcal{D}_{\Cdot c l}(t)\vert} \exp(-(t-t_i))
\end{align} 
where the location that is most checked in recently, has the highest weight. 

When a user decides to check-in for example, at a restaurant, she selects a location that herself or her friends have checked in frequently, recently (exploitation effect), and sometimes she check-ins to a new popular restaurant (exploration effect).
Therefore, we use the following multinomial conditional distribution to define the probability that user $u$ check-ins to location $\ell$, given the time $t$ and category $c$:
\begin{align}\label{equ:spt-spatio}
%\ell \,|\, c,t,u,\mathcal{D}(t)
f_u(\ell|c,t)
%\ell|c,t
&= \underbrace{\sum_{l=1}^{L} \frac{w_{ucl}}{\eta_{uc} + w_{uc\Cdot} } \delta_{\phi_{l}}(\ell)}_{\text{exploitation}} + \underbrace{\frac{\eta_{uc}} {\eta_{uc} + w_{uc\Cdot}}  G_0(\ell)}_{\text{exploration}}
\end{align}
The Dirac delta function $\delta_{\phi_l}(\ell)$ is $1$ if $\phi_l=\ell$, otherwise it is $0$, and the parameter $\eta_{uc}$ models the inclination of the user to explores new locations. This distribution means that, with probability ${w_{ucl}}/(\eta_{uc} + w_{uc\Cdot})$ the current location would be a previously checked in location $\phi_l$ by the user $u$ or any of her friends (since for non visited locations the weight $w_{ucl}$ is zero), and with probability ${\eta_{uc}}/(\eta_{uc} + w_{uc\Cdot})$ it would be selected from all locations in the network, with a probability that is modeled by the following distribution:
\begin{align}
	G_0(\ell) &= \sum_{l=1}^{L} \frac{m_{cl}}{m_{c\Cdot}} \delta_{\phi_l}(\ell)	
\end{align}
Where according to the definition of coefficient $m_{cl}$, it assigns more probability to the popular or recently frequently visited locations.
The main features of the proposed location model are:
\begin{enumerate}
	\item \textit{Exploitation}. The future check-ins of a user are influenced by the history of check-ins of the user and her friends.
	\item \textit{Exploration}. There is a probability that users explore and check into new unseen locations.
	\item \textit{Adaptability}. Using exponential decaying kernel for the weights, the model can adopt to new patterns in users' behavior.
\end{enumerate}
\begin{algorithm}[t]
\small
\DontPrintSemicolon % Some LaTeX compilers require you to use \dontprintsemicolon instead
\KwIn{$N,\,C,\,L$, all parameters $\{\mu_{uc}, \eta_{uc}, \alpha_{uv},\beta_{u}\}$, history of check-ins.}
\KwOut{Next check-in $(t_i,u_i,c_i,l_i)$.}
\For{$u=1:N$}{$\lambda_u(t) = \sum_c \lambda_u(t,c)$}
$\lambda(t) = \sum_{u} \lambda_u(t) $\;
$t_i \sim \mathcal{PP}(\lambda(t))$\;
$u_i \sim \text{Multi}(\frac{\lambda_1(t_i)}{\lambda(t_i)},\ldots,\frac{\lambda_N(t_i)}{\lambda(t_i)}) $\;
$c_i \sim \text{Multi}(\frac{\lambda_{u_i}(t_i,1)}{\lambda_{u_i}(t_i)},\ldots,\frac{\lambda_{u_i}(t_i,C)}{\lambda_{u_i}(t_i)}) $\;
$l_i \sim f_{u_i}(\ell \vert c_i,t_i)$\;
%$(\lambda, t) \gets ((s^{-1}q)^{1/2}, \infty)$\;
%$\tau \gets othersNextPost(\,)$\;
%\While{$ t > \tau $} {
%    $\Delta \sim \exp(\lambda) $\;
%    $t \gets \min(t,\,\tau+\Delta)$\;
%    $\tau \gets othersNextPost(\,)$\;
%}
\Return{$(t_i,u_i,c_i,l_i)$}\;
\caption{Generative model of the check-ins.}
\label{alg:sampling}
\end{algorithm}

\subsection{Summary of Generative Model}
The proposed generative model is summarized in Algorithm \ref{alg:sampling}. Using the superposition theorem, first the time $t$  of  check-in is sampled from the proposed periodic point process $\lambda(t)=\sum_{u,c} \lambda_u(t,c)$, then the user $u$ which generated this event is selected in proportion to its intensity $\lambda_u(t)$. The category $c$ of the check-in is also selected in proportion to $\lambda_u(t,c)$. Finally, the location $l$ is sampled from the proposed location model.  

%****************************************************************
\subsection{Inference}
We propose a Bayesian inference algorithm based on the EM algorithm to find the model parameters. 
To find the maximum likelihood solution, for each check-in $(t_i,u_i,c_i,l_i)$, we define a latent variable $z_i$ as the user that caused $u_i$ to check into location $l_i$, given the time $t_i$ and category $c_i$. We use $1$-of-$N$ coding to represent $z_i$'s. For notional convenient, lets define: 
\begin{align}\label{equ:gamma}
	\gamma_{u c \ell}^{v} &= \frac{w_{u c \ell}^{v}} {\eta_{u c} + w_{u c\Cdot}} \mathbb{I}(v>0) +
	\frac{m_{c\ell} \: \eta_{u c}} {m_{c\Cdot}(\eta_{u c} + w_{u c\Cdot})} \mathbb{I}(v=0) \\
	w_{ucl}^v &= \sum_{i=1}^{\vert\mathcal{D}_{vcl}(t)\vert} \alpha_{vu} \exp({-(t-t_i)}) = 
	\alpha_{vu} \sum_{i=1}^{\vert\mathcal{D}_{vcl}(t)\vert} \exp({-(t-t_i)})
\end{align}
where $\gamma_{u c \ell}^{v}$ is the contribution or influence of user $v$ in the check-in of user $u$ at location $l$ with category $c$. Now, we define:
\begin{align}\label{equ:mark-prob}
	f_{u_i}(l_i,z_i \vert t_i, c_i) = \prod_{v=0}^N (\gamma_{u_i c_i \ell_i}^{v})^{z_{iv}}	
\end{align}
where $z_{iv}$ is the $v$'th element of $z_i$, or the index of the user that caused $i$'th check-ins. But, $v=0$ is not the index of  a user, it represents the exploration effect.
It can be verified that marginalizing out the $z_i$, $ \sum_{z_i} f_{u_i}(l_i,z_i \vert t_i, c_i)$, results in the probability distribution (\ref{equ:spt-spatio}).
Now, to evaluate the complete likelihood $p(\mathcal{D}, Z \vert \theta)$ of the data $\mathcal{D}$ and hidden variables $Z=\{z_i\}_{i=1}^{K}$, given the parameters $\theta = \{\mu_{uc}, \eta_{uc}, \alpha_{uv},\beta_{u}\}$,  $u,v=1\ldots N$ and $c=1\ldots C$,  we use the following proposition.

\begin{proposition}[\cite{zarezade2015correlated}]
\label{prop1}
Let $N_u, \; u=1,2,\cdots,N$ be a multivariate marked point process with the associated intensity $\lambda_u(t)$, and the mark probability $f_u(m|t)$. Let 
$\mathcal{D}=\{(t_i,u_i,m_i)\}_{i=1}^K$ be a realization of the process over $[0,T]$. Then the likelihood of $\mathcal{D}$ on model $N_u$ with parameters $\theta$ can be expressed as follows.
\begin{align*}
p(\mathcal{D} \vert \theta) =\exp\left(-\int_0^T\sum_{u=1}^N \lambda_u(\tau) \; d\tau \right)
 \prod_{i=1}^{\vert\mathcal{D}\vert} \lambda_{u_i}(t_i) f_{u_i}(m_i|t_i) 
\end{align*}
\end{proposition}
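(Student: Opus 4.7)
The plan is to build the likelihood by applying the chain rule of probability to the ordered sequence of events, factoring each term into a temporal component, a user-identity component, and a mark component, and then recognizing that the temporal product telescopes into a single exponential of an integrated intensity.

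First I would reduce the multivariate process to a scalar one by superposition: define the aggregated intensity $\lambda(t) = \sum_{u=1}^{N} \lambda_u(t)$, so that the pooled ground process $\{t_i\}$ (forgetting user labels and marks) is a point process with conditional intensity $\lambda^{*}(t)$. For each ordered event $(t_i, u_i, m_i)$ I factor the joint conditional density as
\begin{align*}
f^{*}(t_i, u_i, m_i) = f^{*}(t_i)\,\Pr(u_i \mid t_i, \mathcal{F}_{t_i})\,f_{u_i}^{*}(m_i \mid t_i).
\end{align*}
By the thinning / superposition property of point processes, the probability that the event at time $t_i$ is attributed to user $u_i$ is $\lambda_{u_i}(t_i)/\lambda(t_i)$, and $f_{u_i}^{*}(m_i\mid t_i)$ is the mark probability. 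Meanwhile, using the identity already derived in the preliminaries, $f^{*}(t_i) = \lambda^{*}(t_i)\exp\bigl(-\int_{t_{i-1}}^{t_i}\lambda^{*}(s)\,ds\bigr)$. Multiplying these three factors, the $\lambda(t_i)$ in the denominator of the user-selection probability cancels the $\lambda^{*}(t_i)$ from $f^{*}(t_i)$, leaving $\lambda_{u_i}(t_i)\,f_{u_i}^{*}(m_i\mid t_i)\,\exp\bigl(-\int_{t_{i-1}}^{t_i}\lambda(s)\,ds\bigr)$ per event.

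Next I would form the full likelihood by taking the product over $i = 1,\dots,K$ and appending the survival probability $\exp\bigl(-\int_{t_K}^{T}\lambda(s)\,ds\bigr)$ that accounts for observing no further events in the trailing interval $(t_K, T]$. The exponentials telescope over the disjoint intervals $(t_{i-1}, t_i]$ and $(t_K, T]$ (with $t_0 := 0$), collapsing into a single factor $\exp\bigl(-\int_0^T \lambda(s)\,ds\bigr) = \exp\bigl(-\int_0^T \sum_u \lambda_u(\tau)\,d\tau\bigr)$. What remains is precisely $\prod_{i=1}^{K}\lambda_{u_i}(t_i)\,f_{u_i}(m_i\mid t_i)$, matching the claimed expression.

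The main obstacle I anticipate is being rigorous about the thinning step: one must justify that conditional on a point of the pooled process occurring at $t_i$, the probability that it belongs to component $u_i$ is exactly $\lambda_{u_i}(t_i)/\lambda(t_i)$. I would handle this by the standard infinitesimal argument, writing $\Pr(\text{event from } u_i \text{ in } [t,t+dt])/\Pr(\text{event in } [t,t+dt]) \to \lambda_{u_i}(t)/\lambda(t)$ as $dt\to 0$, which follows from the definition of the intensity stated in the preliminaries together with the assumption that no two components jump simultaneously. A secondary subtlety is the trailing survival term, which is easy to forget but essential so that the integral runs over the entire observation window $[0,T]$ rather than merely $[0,t_K]$.
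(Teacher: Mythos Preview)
Your argument is correct and is the standard derivation: superpose to a scalar ground process, factor each event into waiting-time, component-selection, and mark densities, cancel $\lambda(t_i)$, telescope the exponentials, and append the terminal survival term. The paper, however, offers no proof of this proposition at all; it simply states the result and attributes it to \cite{zarezade2015correlated}, so there is nothing to compare against beyond noting that your proposal is sound and self-contained where the paper merely cites.
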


If we consider $(c_i,l_i,z_i)$ as the mark $m_i$ of the process, 
according to this proposition the complete likelihood of our model is,
\begin{align}
	p(\mathcal{D}, Z \vert \theta) &= \exp\left(-\int_0^T\sum_{u=1}^N \lambda_u(\tau) \; d\tau \right) \prod_{i=1}^{\vert\mathcal{D}\vert} \lambda_{u_i}(t_i) f_{u_i}(c_i, l_i, z_i|t_i)
\end{align}
where using Bayes' rule and Eq. (\ref{equ:mark-prob}) it can be evaluated as follows.
\begin{align}
	p(\mathcal{D}, Z \vert \theta) &= \exp\left(-\int_0^T\sum_{u=1}^N \lambda_u(\tau) \; d\tau \right) \prod_{i=1}^{\vert\mathcal{D}\vert} \lambda_{u_i}(t_i) f_{u_i}(c_i|t_i)f_{u_i}(l_i, z_i|t_i, c_i) \nonumber \\
	&= \exp\left(-\sum_{u=1}^N \sum_{c=1}^C \int_0^T \lambda_{u}(\tau,c) \; d\tau \right) \prod_{i=1}^{\vert\mathcal{D}\vert} \lambda_{u_i}(t_i,c_i) f_{u_i}(l_i, z_i|t_i, c_i) \nonumber \\
	&= \exp\left(-\sum_{u=1}^N \sum_{c=1}^C \int_0^T \lambda_{u}(\tau,c) \; d\tau \right) \prod_{i=1}^{\vert\mathcal{D}\vert} \lambda_{u_i}(t_i, c_i) \prod_{v=0}^N (\gamma_{u_i c_i l_i}^{v})^{z_{iv}} \nonumber %\label{equ:comp-lk}
\end{align}
To derive the second line, we used the superposition theorem, and the fact that the probability of a category, according to our generative model is $f_{u_i}(c_i|t_i)=\lambda_{u_i}(t_i,c_i)/\lambda_{u_i}(t_i)$.
Given the joint distribution of the observed and latent variables $p(\mathcal{D},Z \vert \theta)$, we use EM algorithm to maximize the likelihood function $p(\mathcal{D} \vert \theta)$ with respect to $\theta$. 
In the E-step we evaluate $p(Z \vert \mathcal{D}, \theta)$. 
%Since  $p(Z \vert \mathcal{D}, \theta) \propto p(\mathcal{D}, Z \vert \theta)$,
Using Bayes' rule we can write the posterior distribution of the  latent variables as,
\begin{align}
	p(Z \vert \mathcal{D}, \theta) \propto
\prod_{i=1}^{\vert\mathcal{D}\vert}  \prod_{v=0}^N (\gamma_{u_i c_i l_i}^{v})^{z_{iv}}	
%\frac{\prod_{i=1}^{\vert\mathcal{D}\vert}  \prod_{v=0}^N (\gamma_{u_i c_i l_i}^{v})^{z_{iv}} }{\sum_Z \prod_{i=1}^{\vert\mathcal{D}\vert}  \prod_{v=0}^N (\gamma_{u_i c_i l_i}^{v})^{z_{iv}} }
%	= \prod_{i=1}^{\vert\mathcal{D}\vert} \text{Multi}(z_i|\tilde{\gamma}_{u_i c_i l_i}^{0}, \ldots, \tilde{\gamma}_{u_i c_i l_i}^{N})
\end{align}
which factorizes over $i$, so that $z_i$'s are independent with multinomial distribution and we can write the expected of $z_{iv}$ under this distribution as follows.
\begin{align}
	\mathbb{E}[z_{iv}] = \frac{ \sum_{z_{iv}} z_{iv} (\gamma_{u_i c_i \ell_i}^{v})^{z_{iv}} }{\sum_{z_i}\prod_{v=0}^{N} (\gamma_{u_i c_i \ell_i}^v)^{z_{iv}}}
	= \frac{\gamma_{u_i c_i \ell_i}^{v}}{\sum_{v=0}^{N} \gamma_{u_i c_i \ell_i}^v}
\end{align}
In the M-step we maximize $\mathbb{E}_Z[\ln p(\mathcal{D}, Z \vert \theta)]$ the expected complete log-likelihood, which can be decomposed to the sum of expected log-likelihoods of users $\mathbb{E}_{Z_u}[\ln p(\mathcal{D}_u, Z_u \vert \theta_u)]$.
\begin{align}
	\mathbb{E}_Z[\ln p(\mathcal{D}, Z \vert \theta)] =& - \sum_{u=1}^N \sum_{c=1}^C \int_0^T \lambda_{u}(\tau,c) \; d\tau + \sum_{i=1}^{\vert\mathcal{D}\vert} \log\lambda_{u_i}(t_i,c_i) 
	+ \sum_{i=1}^{\vert\mathcal{D}\vert} \sum_{v=0}^N \mathbb{E}[z_{iv}] \log \gamma_{u_i c_i \ell_i}^{v} \nonumber \\
	&= \sum_u \left( -\int_0^T \sum_{c=1}^C \lambda_{u}(\tau,c) \; d\tau + \sum_{i=1}^{\vert\mathcal{D}_u\vert} \log\lambda_{u}(t_i,c_i) + \sum_{i=1}^{\vert\mathcal{D}_u\vert} \sum_{v=0}^N \mathbb{E}[z_{iv}] \log \gamma_{u c_i \ell_i}^{v} \right)  \nonumber \\
	&= \sum_u \mathbb{E}_{Z_u}[\ln p(\mathcal{D}_u, Z_u \vert \theta_u)] \label{equ:users-lglk}
\end{align}
Where $Z_u = \{z_i \in Z \,\vert\, u_i=u\}$ and $\theta_u = \{\mu_{uc}, \eta_{uc}, \alpha_{uv},\beta_{u}\}$,  $v=1\cdots N$, $c=1\cdots C$.
Accordingly, the M-step can be decomposed to multiple maximizations over users, which can be done in parallel. The two steps of the EM algorithm can be summarized as follows.
\begin{align*}
	&\text{E-Step:}\quad  \mathbb{E}[z_{iv}] =  \frac{\gamma_{u_i c_i \ell_i}^{v}}{\sum_{v=0}^{N} \gamma_{u_i c_i \ell_i}^v}
 \\
	&\text{M-Step:}\quad \theta^*_u = \argmax_{\theta_u \geq 0} \; -\int_0^T \sum_{c=1}^C \lambda_{uc}(\tau) \; d\tau + \sum_{i=1}^{\vert\mathcal{D}_u\vert} \log\lambda_{u c_i}(t_i) + \sum_{i=1}^{\vert\mathcal{D}_u\vert} \sum_{v=0}^N \mathbb{E}[z_{iv}] \log \gamma_{u c_i l_i}^{v}
\end{align*}
In the following proposition, we prove that the maximization in M-step is concave, so it has a unique and optimal solution. 

\begin{proposition}
The expected log-likelihood of a user, $\mathbb{E}_{Z_u}[\ln p(\mathcal{D}_u, Z_u \vert \theta_u)]$ as a function of $\{\mu_{uc}, \tilde{\eta}_{uc}, \tilde{\alpha}_{uv},\beta_{u}\}$  is concave, where $\alpha_{uv}=\exp({\tilde{\alpha}_{uv}})$ and $\eta_{uc} = \exp({\tilde{\eta}_{uc}})$.
\end{proposition}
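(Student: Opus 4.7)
The plan is to exploit the fact that the expected log-likelihood in equation~(\ref{equ:users-lglk}) splits cleanly into two groups of terms whose variables do not overlap: an \emph{intensity} part depending only on $(\mu_{uc},\beta_u)$, and a \emph{mark} part (involving $\log \gamma_{u c_i \ell_i}^{v}$) depending only on $(\tilde{\alpha}_{vu},\tilde{\eta}_{uc})$. Since concavity is preserved under summation and the two parameter blocks are disjoint, it suffices to show that each group is concave in its own variables.

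First I would handle the intensity terms. Looking at equation~(\ref{equ:spt-t-m}), the conditional intensity $\lambda_u(t,c) = \mu_{uc} + \beta_u \sum_i \exp[-(t-t_i-k_i\tau)^2/(2\sigma^2)]\exp(-k_i)$ is an affine (in fact nonnegative linear) function of $(\mu_{uc},\beta_u)$, because the Gaussian factors and $\exp(-k_i)$ are fixed once the history is observed. Consequently $-\int_0^T \sum_c \lambda_u(\tau,c)\,d\tau$ is affine, hence concave, and each $\log \lambda_u(t_i,c_i)$ is the logarithm of a positive affine function, which is concave. Summing yields concavity of the intensity contribution.

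The main obstacle, and the reason for the reparametrization, lies in the $\log \gamma_{u c_i \ell_i}^{v}$ terms. In the original $(\alpha,\eta)$ coordinates the denominator $\eta_{uc} + w_{uc\Cdot} = \eta_{uc} + \sum_v \alpha_{vu} S_{vc}$ (where $S_{vc} := \sum_{i:\, t_i<t,\, u_i=v,\, c_i=c} e^{-(t-t_i)}$ is a data-dependent constant) is linear, but the numerator is also linear, and a log-ratio of linear functions is not concave in general. Substituting $\alpha_{vu}=e^{\tilde{\alpha}_{vu}}$ and $\eta_{uc}=e^{\tilde{\eta}_{uc}}$ turns the denominator into
\begin{equation*}
\log\Bigl(e^{\tilde{\eta}_{uc}} + \sum_v e^{\tilde{\alpha}_{vu} + \log S_{vc}}\Bigr),
\end{equation*}
which is a log-sum-exp, a standard convex function of $(\tilde{\alpha},\tilde{\eta})$; its negation is therefore concave. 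For $v>0$ the numerator $\log(w_{uc\ell}^v/\cdot) = \tilde{\alpha}_{vu}+\text{const}$ is linear, and for $v=0$ the numerator contributes $\tilde{\eta}_{uc}+\text{const}$, also linear. Hence each $\log \gamma_{u c_i \ell_i}^{v}$ equals a linear term minus a log-sum-exp, and is concave. Weighting by the nonnegative constants $\mathbb{E}[z_{iv}]$ from the E-step and summing preserves concavity.

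Finally I would combine the two blocks: since concavity in each block individually, together with independence of the parameter sets, implies joint concavity, the full expected log-likelihood $\mathbb{E}_{Z_u}[\ln p(\mathcal{D}_u,Z_u\mid \theta_u)]$ is concave in $(\mu_{uc},\tilde{\eta}_{uc},\tilde{\alpha}_{vu},\beta_u)$, which establishes the proposition and guarantees that the M-step has a unique global maximum.
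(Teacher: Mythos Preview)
Your proposal is correct and follows essentially the same argument as the paper: the integral term is affine in $(\mu_{uc},\beta_u)$, the $\log\lambda_u$ terms are logs of affine functions, and after the exponential reparametrization each $\log\gamma^v_{uc\ell}$ becomes a linear term minus a log-sum-exp in $(\tilde\alpha,\tilde\eta)$. You are slightly more explicit than the paper in noting the disjointness of the two parameter blocks and the nonnegativity of the weights $\mathbb{E}[z_{iv}]$, but the route is the same.
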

\begin{proof}.
According to Eq. (\ref{equ:users-lglk}) the log-likelihood of user $u$ is:
\begin{align}
\mathbb{E}_{Z_u}[\ln p(\mathcal{D}_u, Z_u \vert \theta_u)] 
&= -\int_0^T \sum_{c=1}^C \lambda_{u}(\tau,c) \; d\tau + \sum_{i=1}^{\vert\mathcal{D}_u\vert} \log\lambda_{u}(t_i, c_i) + \sum_{i=1}^{\vert\mathcal{D}_u\vert} \sum_{v=0}^N \mathbb{E}[z_{iv}] \log \gamma_{u c_i \ell_i}^{v}  \nonumber
\end{align}
The first term is a linear function of $\{\mu_{uc},\beta_{u}\}$, so it is both convex and concave. The second term is the log of a linear function which is concave, according to composition rules \cite{Boyd2004}. The third term is composed of $\log \gamma_{u c_i l_i}^{v}$, which for $v>0$, 
\begin{align*}
	\log \gamma_{u c_i \ell_i}^{v} 
%	&= \log(e^{\tilde{\alpha}_{vu}}) - \log(e^{\tilde{\eta}_{u c_i}} + w_{u c_i\Cdot}) + \text{const} \\
&=\tilde{\alpha}_{vu} - \log\Big(e^{\tilde{\eta}_{u c_i}} + \sum_{j=1}^{\vert\mathcal{D}_{\Cdot c_i\Cdot}(t)\vert} e^{\tilde{\alpha}_{u_j u}} e^{-(t-t_j)}\Big) + \text{const}
\end{align*}
and for $v=0$,
\begin{align*}
\log \gamma_{u c_i \ell_i}^{v}
&=\tilde{\eta}_{uc_i} - \log\Big(e^{\tilde{\eta}_{u c_i}} + \sum_{j=1}^{\vert\mathcal{D}_{\Cdot c_i\Cdot}(t)\vert} e^{\tilde{\alpha}_{u_j u}} e^{-(t-t_j)}\Big) + \text{const}.
\end{align*}
In both cases $\log \gamma_{u c_i l_i}^{v}$ is concave according to Lemma 1 of \cite{zarezade2015correlated} which state that logarithm of sum of linear exponentials is convex. So, the overall expression is concave. Actually, we use $\tilde{\eta}_{uc}, \tilde{\alpha}_{uv}$ instead of ${\eta}_{uc}, {\alpha}_{uv}$ in the implementations, and solve the resulting concave optimization.
%\sina{It may be better to mention that we are going to use $\exp(\alpha)$ and $\exp(\eta)$ instead of $\alpha$ and $\eta$ as it does not harm because the replacement does not change the constraints on variables( in my view it is not clear enough) }\sina{It may be good(?) to mention that our problem in its current form is not strictly convex and has many alternative answers. The issue can be solved by imposing a constraint on sum of the $\alpha$ and $\eta$ variables}
\end{proof}

%\begin{lemma}[\cite{zarezade2015correlated}]
%		\label{lem1}
%		Function $f:\mathbb{R}^n \rightarrow \mathbb{R}$ with $\textnormal {dom}f = \mathbb{R}^n$ 	is convex.
%		\begin{align*}
%		f(x)=\log \sum_i \exp(a_i^Tx+b_i)
%		\end{align*}		
%\end{lemma}

%\begin{proof}
%	Let $A=\left [a_1,a_2,\cdots,a_n  \right ]^T$, $b=\left [ b_1,b_2,\cdots,b_n \right ]^T$ and $z_i=\exp(a_i^Tx+b_i)$, using  chain rule we have
%	\begin{align*}
%	\nabla^2 f(x) &= A^T \left( \frac{1}{\bm{1}^T z} \text{diag}(z)-\frac{1}{(\bm{1}^Tz)^2} z z^T \right ) A
%	\end{align*}
%	Now we must show that for all $u$ we have $u^T\nabla^2 f(x) u \geq 0$, or equivalently for all $v$, where $v =Au$ we have
%	\begin{align*}
%	v^T \left( \frac{1}{\bm{1}^T z} \text{diag}(z)-\frac{1}{(\bm{1}^Tz)^2} z z^T \right ) v & \geq 0 \\
%	\frac{\sum_i v_i^2 z_i }{\sum_i z_i} - \left ( \frac{\sum_i v_i z_i }{\sum_i z_i} \right )^2 & \geq 0 \\
%	\left ( \sum_i v_i^2 z_i \right ) \left( \sum_i z_i   \right) &\geq \left( \sum_i v_i z_i \right)^2
%	\end{align*}
%	which is true according to Cauchy-Schwarz inequality.
%\end{proof}

%The M step optimization can de decomposed to separate optimization for each user $u$ as follows
%\begin{align}
%	\mathbb{E}_Z[\ln p(\mathcal{D}, Z \vert \theta)] &= -\int_0^T \sum_{u=1}^N \sum_{c=1}^C \lambda_{uc}(\tau) \; d\tau + \sum_{i=1}^{\vert\mathcal{D}\vert} \log\lambda_{u_i c_i}(t_i) + \sum_{i=1}^{\vert\mathcal{D}\vert} \sum_{j=0}^N \mathbb{E}[z_{ij}] \log \gamma_{u_i c_i l_i}^{u_j} 
%\end{align}

\section{Experiments}
In this section, using both synthetic and real data, we evaluate the performance of the proposed method. 
%First, in the synthetic data experiments, we show that the proposed inference algorithm can learn the model parameters with high accuracy. Then in the real data experiments, we show that the proposed method outperform the other competing methods.

%
\begin{figure}[t]
	\centering
	\includegraphics[width=0.35\textwidth]{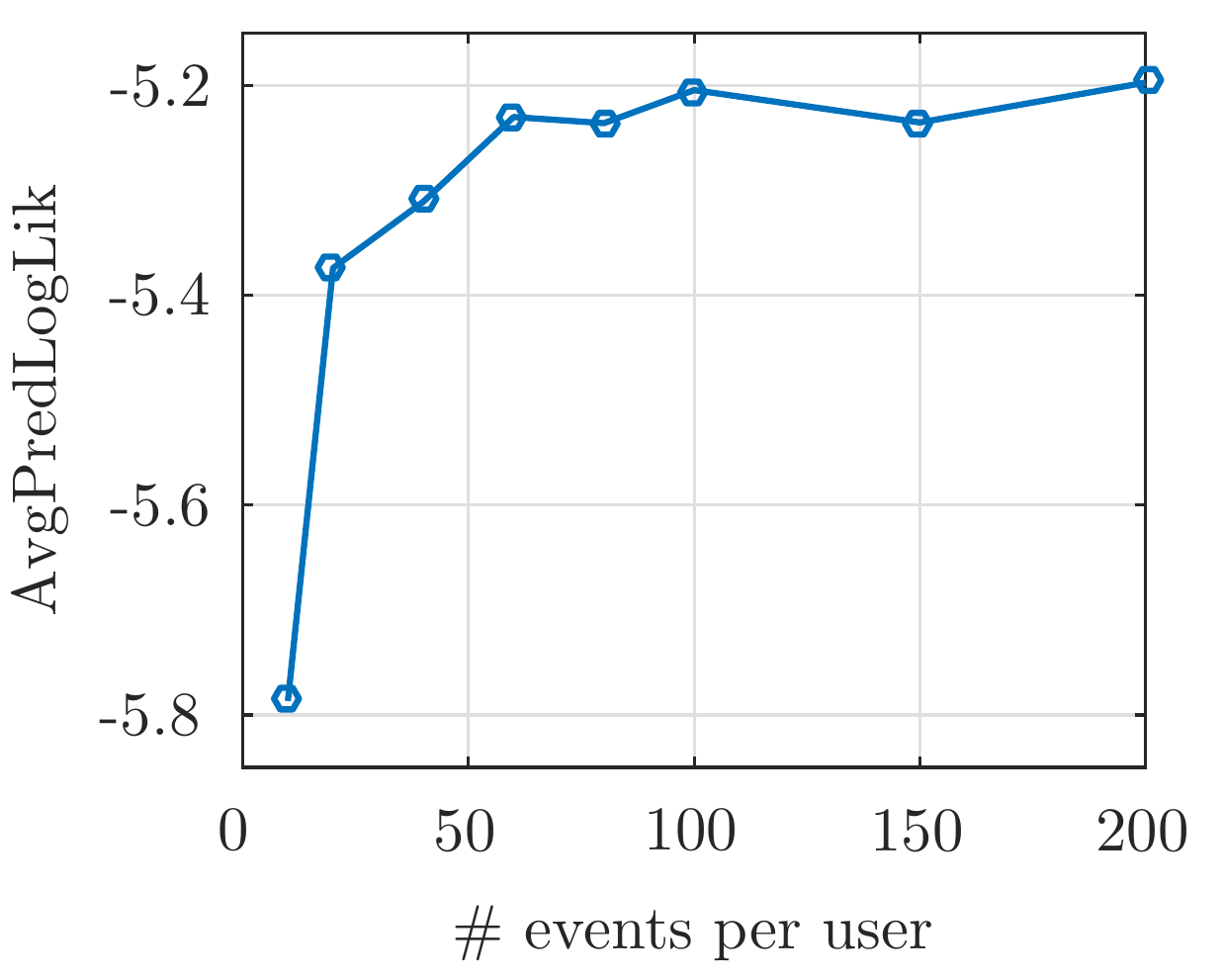}
	\hspace{0.5cm}
	\includegraphics[width=0.35\textwidth]{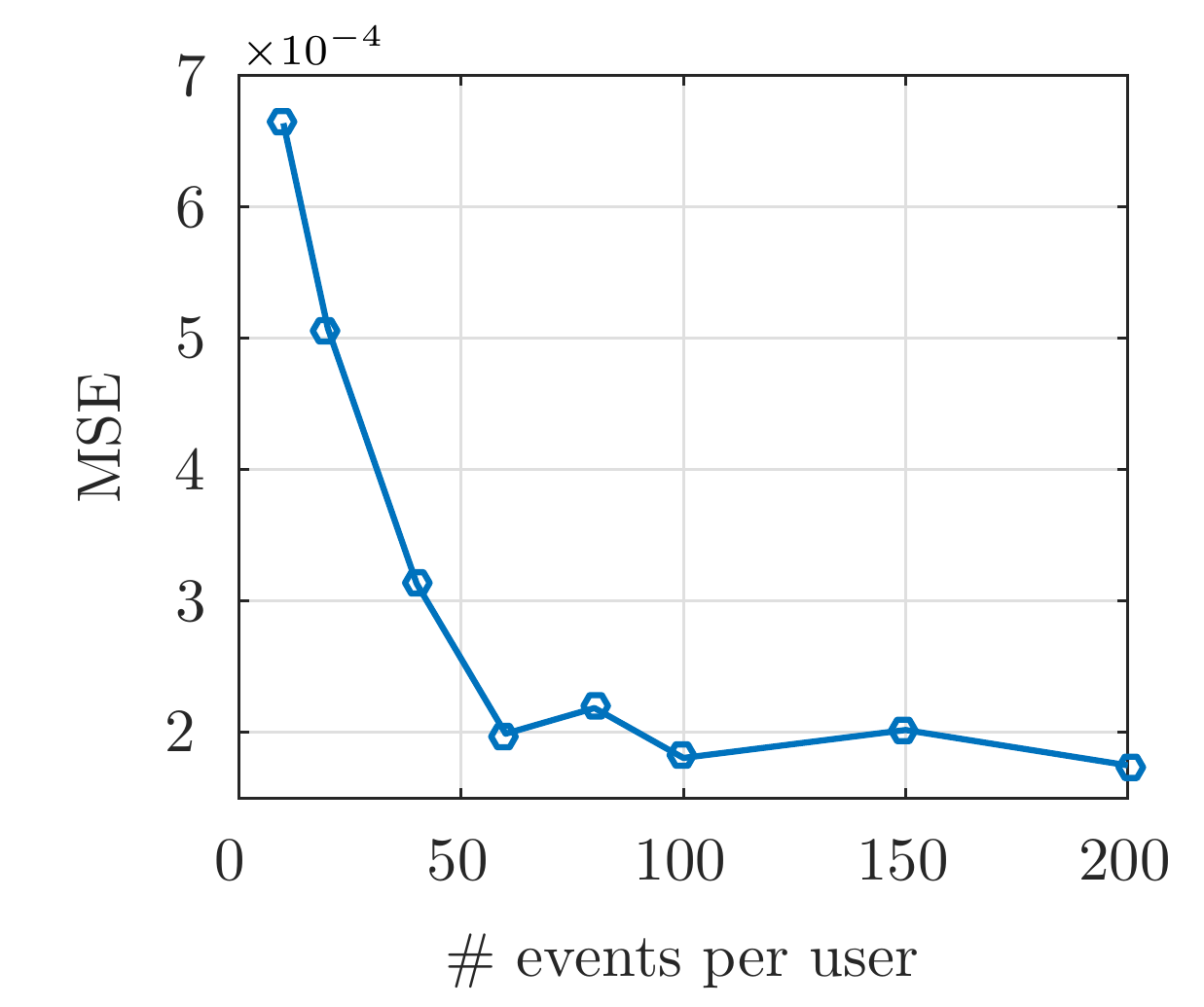}
	\caption{Average predicted log-likelihood on the test data (\textit{left}), and MSE of the learned parameters (\textit{right}), in the temporal model for the different percentages of the train data.}
	\label{fig:synth-temporal-lglk-mse}
\end{figure}
\begin{figure}[t]
	\centering
	\includegraphics[height=4.5mm]{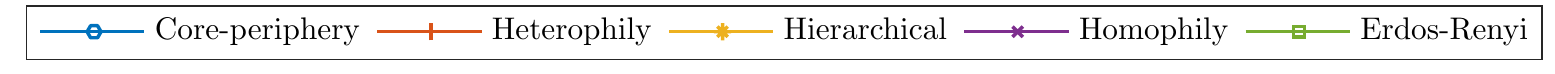}\\
	\vspace{-1mm}
	\includegraphics[width=0.35\textwidth]{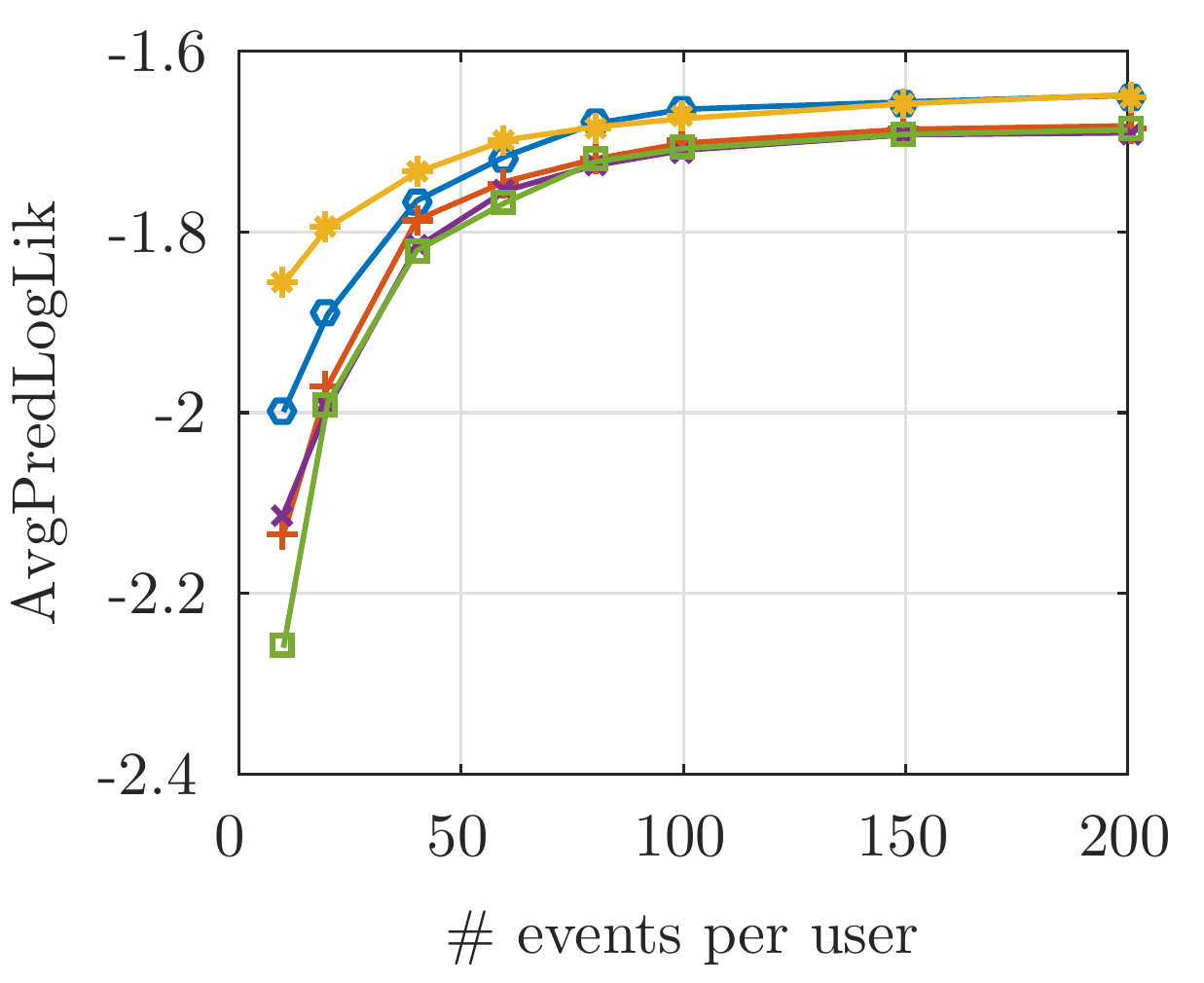}
	\hspace{0.5cm}
	\includegraphics[width=0.35\textwidth]{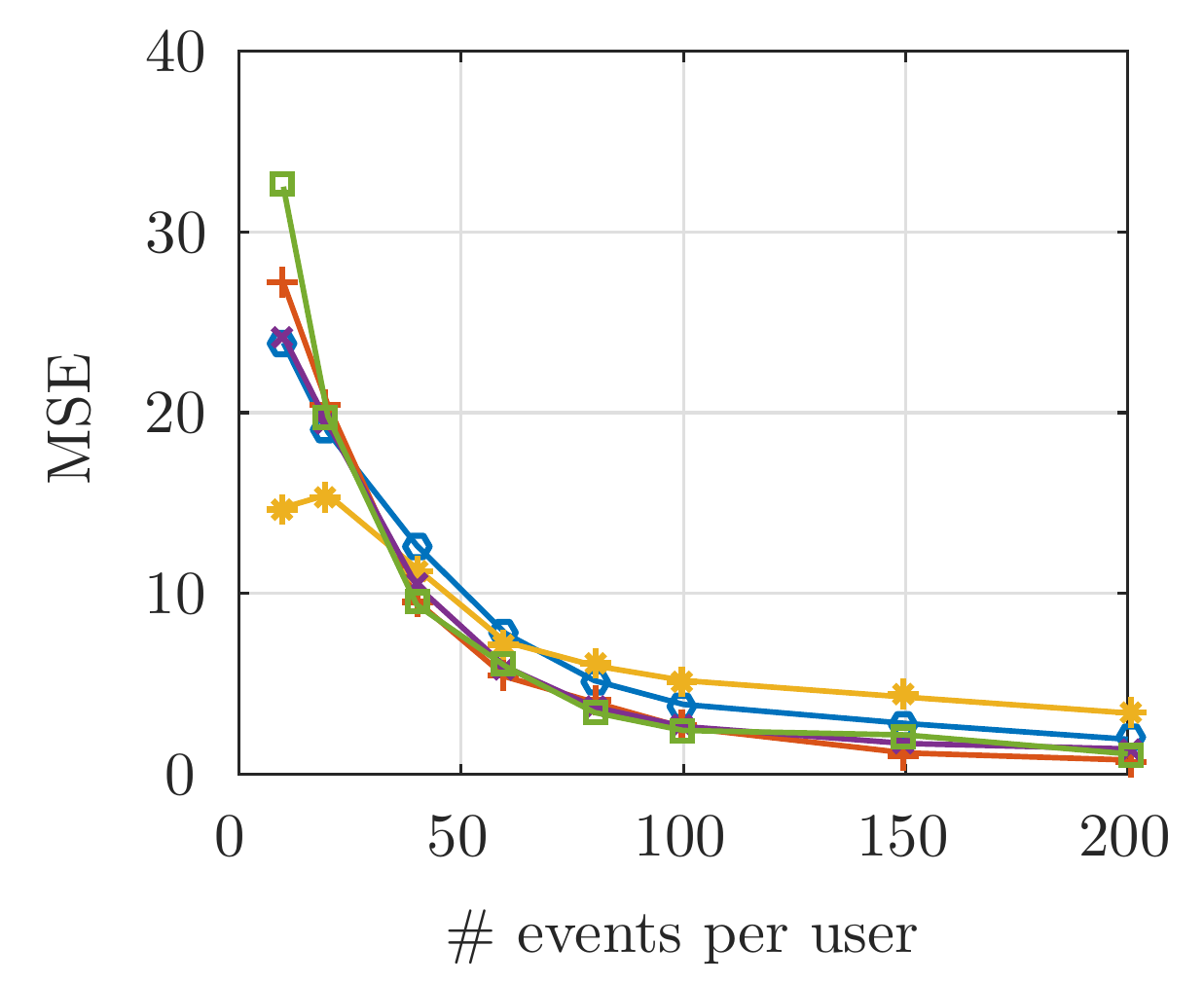}
	\caption{Average predicted log-likelihood on the test data (\textit{left}), and MSE of the learned parameters (\textit{right}), in the spatial model for the different percentages of the train data and various random graph structures.}
	\label{fig:synth-spatial-lglk-mse}
\end{figure}
%
%-------------------
\subsection{Experiments on Synthetic Data}
\subsubsection{Experimental setup}
%{\textbf{Experimental setup}}. 
We experiment with five random Kronecker networks \cite{Leskovec2010} with $N=64$ nodes, namely Core-periphery, Heterophily, Hierarchical, Homophily, and Erdos-Renyi\footnote{Where the seed matrix parameters are $[0.85,0.45;0.45,0.3]$, $[0.3,0.89;0.89,0.3]$, $[0.9,0.1;0.1,0.9]$, $[0.89,0.3;0.3,0.89]$, and $[0.60,0.60;0.60,0.60]$, respectively}. We set the number of categories to $C=4$ and consider eight locations in each category. The temporal and spatial model parameters are randomly drawn from the uniform distributions $\mu_{uc}, \eta_{uc} \sim U(0,0.05)$, $\alpha_{uv} \sim U(0,0.5)$ and $\beta_{u} \sim U(0,0.1)$. The period and standard deviation in the temporal model are $\tau=12$ and $\sigma=0.5$, respectively. We generate $16000$ check-ins from our model, using the Ogata method \cite{Ogata1981}, and consider the first $80\%$ of them for the train and the remaining $20\%$ for the test data. Then we learn the model with different percentages of the training data, and evaluate the average predicted log-likelihood on the test data ({\textit{AvgPredLogLik}) and the mean squared error between the estimated and real parameters ({\textit{MSE}). The inference   algorithm is implemented in parallel for users. All source codes and datasets are available in our git repository\footnote{{https://github.com/azarezade/stp}}.

\begin{figure}[t]
	\centering
	\includegraphics[width=0.325\textwidth]{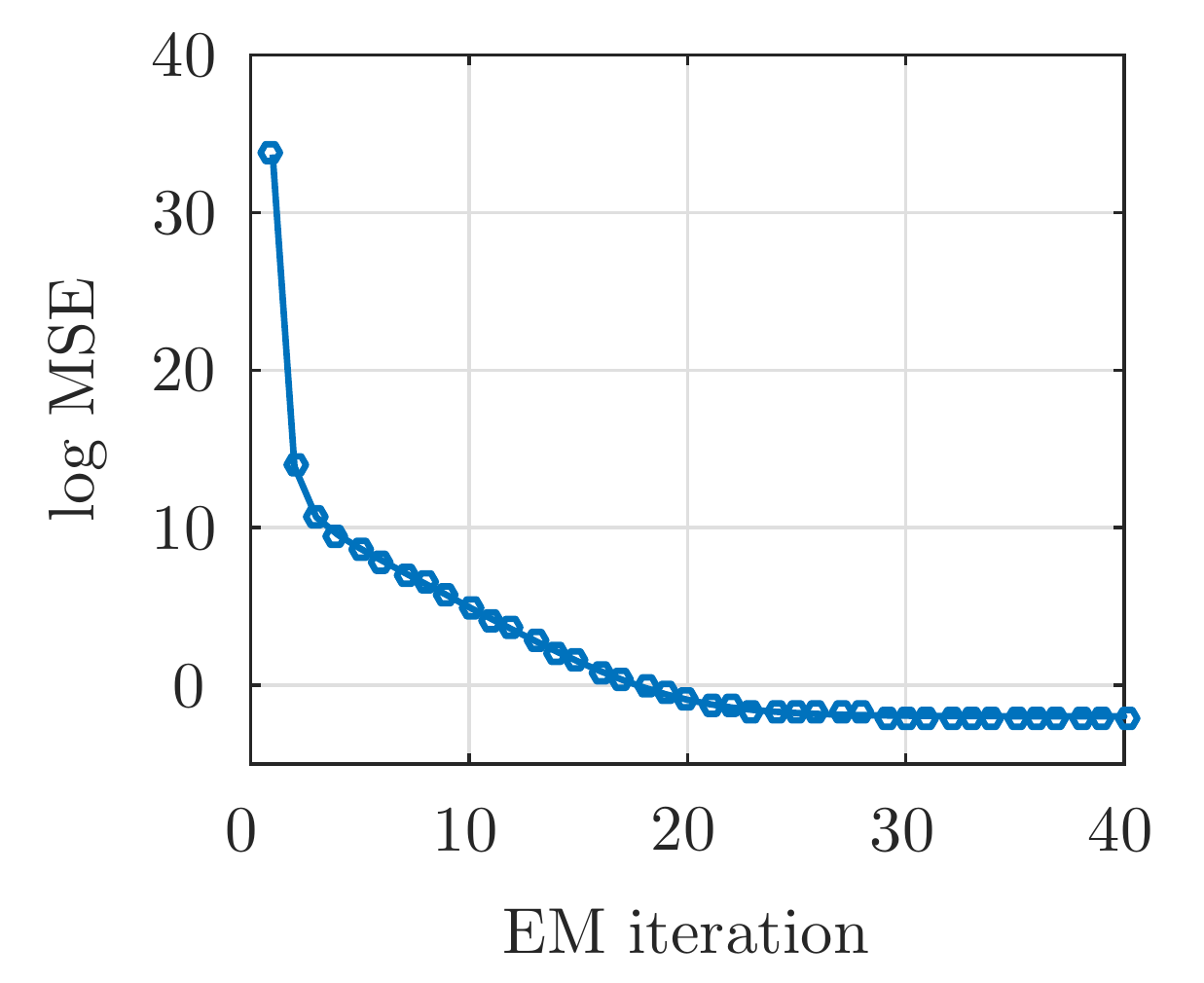}
	\includegraphics[width=0.325\textwidth]{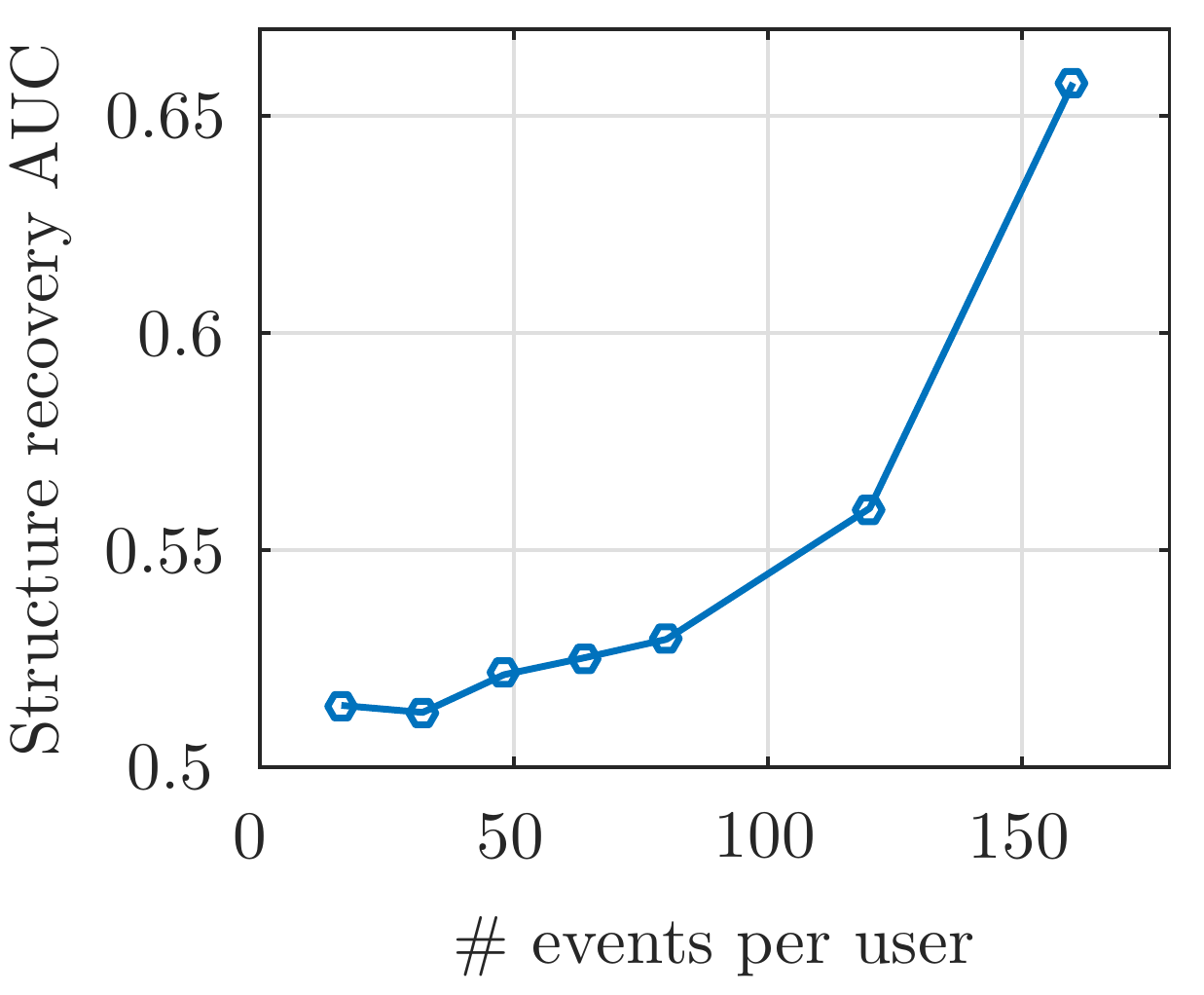}
	\includegraphics[width=0.325\textwidth]{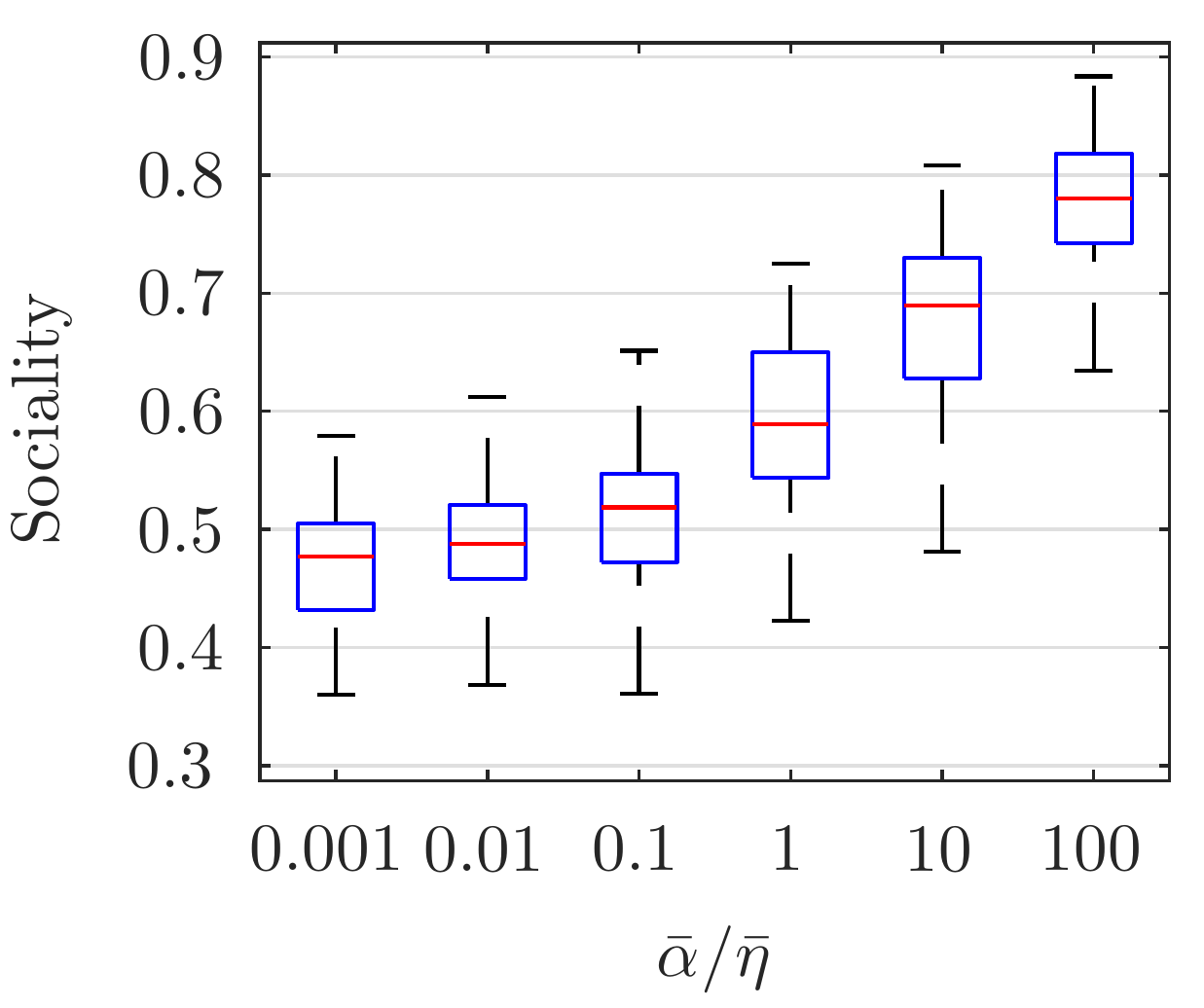}
	\caption{Average predicted log-likelihood in logarithmic scale vs the iterations of EM (\textit{left}), the network structure recovery for different percentages of the train data (\textit{middle}), and the effect of spatial parameters on the users' Sociality (\textit{right}).}
	\label{fig:synth-em-auc}
\end{figure}
%
%\vspace{3mm}
\subsubsection{Results}
%\noindent{\textbf{Results}}. 
In Fig. \ref{fig:synth-temporal-lglk-mse} the {\textit{AvgPredLogLik}} and {\textit{MSE}} of the temporal model is plotted versus the size of train data, where the average estimation error decreases to about $7{\times}10^{-4}$. These measures are also plotted for the spatial model with different random network structures in Fig. \ref{fig:synth-spatial-lglk-mse}, given the time of check-ins. We can see that the parameter estimation error decreases and the average log-likelihood increases as we increase the size of train data, which shows the proposed inference algorithm can consistently learn the model parameters with a very small estimation error. Furthermore, in the left of Fig. \ref{fig:synth-em-auc} we show that for a fixed number of events per user, increasing the EM iterations would decrease \textit{MSE} to about $0.1$. 
To investigate the network structure prediction of our model, for each size of the train data, we use a threshold to convert the predicted weighted network (\textit{i.e.}, the $\alpha_{ij}$'s) to a $(0,1)$-adjacency matrix and evaluate the percent of recovered edges to form the ROC curve. Then, we find the AUC curve, which is illustrated in the middle of Fig. \ref{fig:synth-em-auc}. 

\begin{figure}[t]
	\centering
	\includegraphics[width=0.51\textwidth]{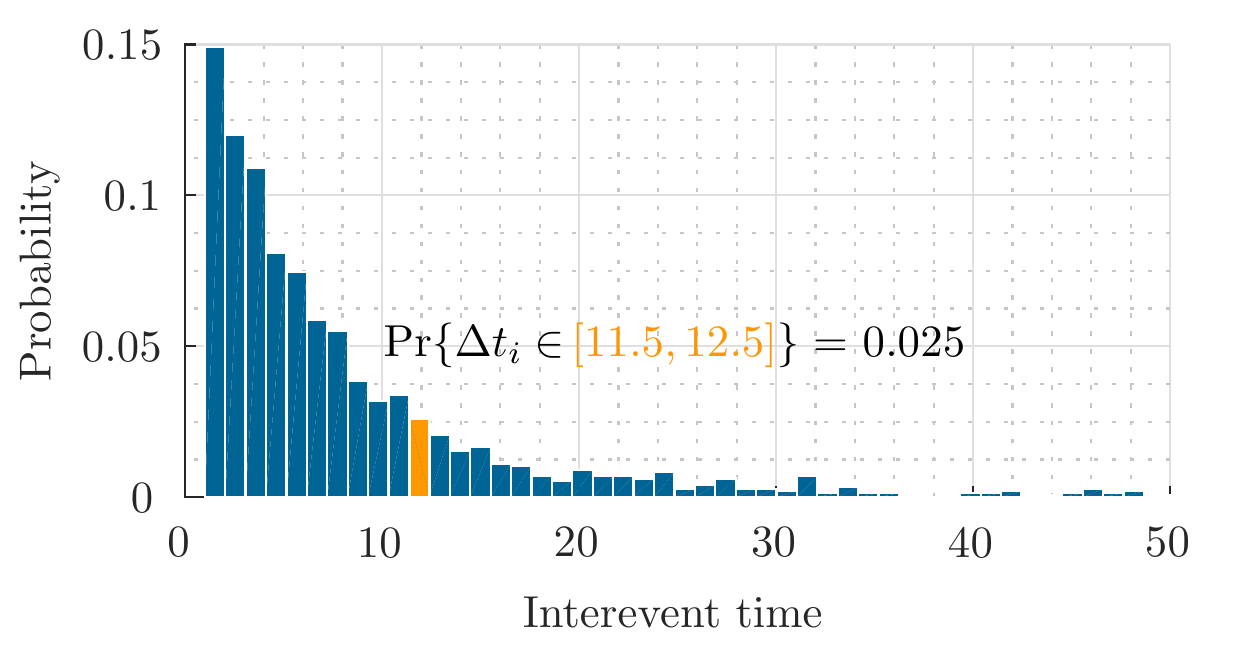}
	\hspace{-5mm}
	\includegraphics[width=0.51\textwidth]{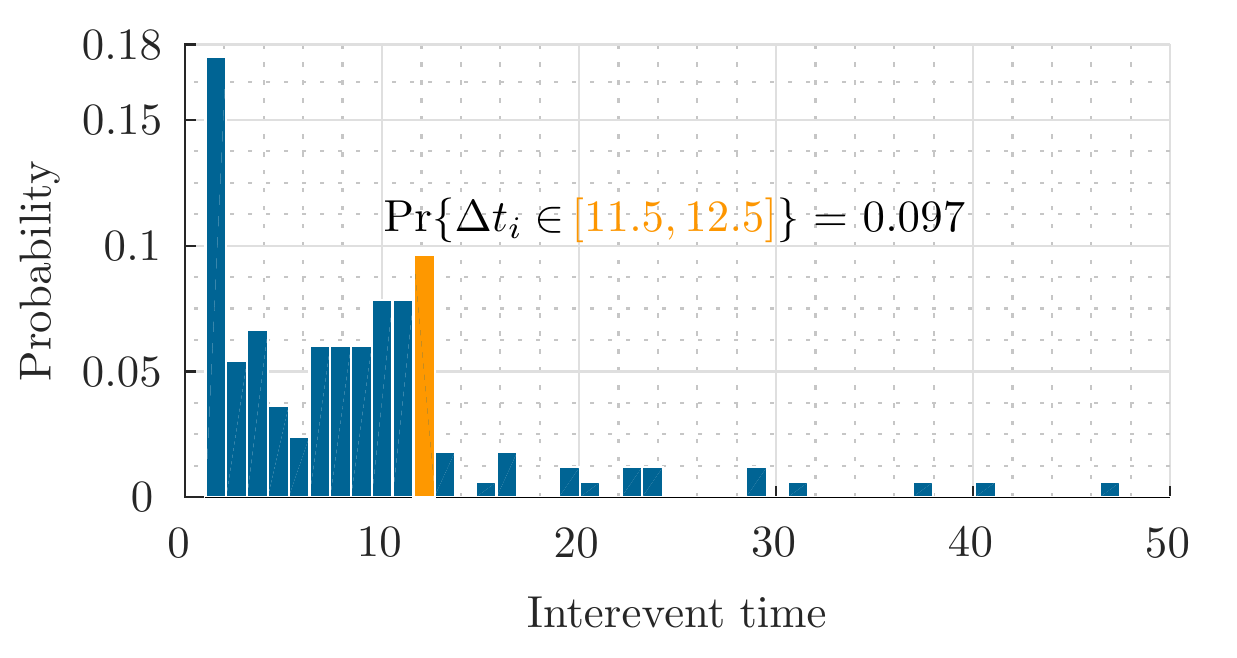}
	\caption{The distribution of interevent in the temporal model with $\beta=0$ (\textit{left}) and $\beta=1$ (\textit{right}). We can see that increasing $\beta$ would cause a peak around $12$, which is the period of the simulated events.}
	\label{fig:synth-temporal-period}
\end{figure}
\begin{figure}[t]
	\centering
	\includegraphics[width=0.59\textwidth]{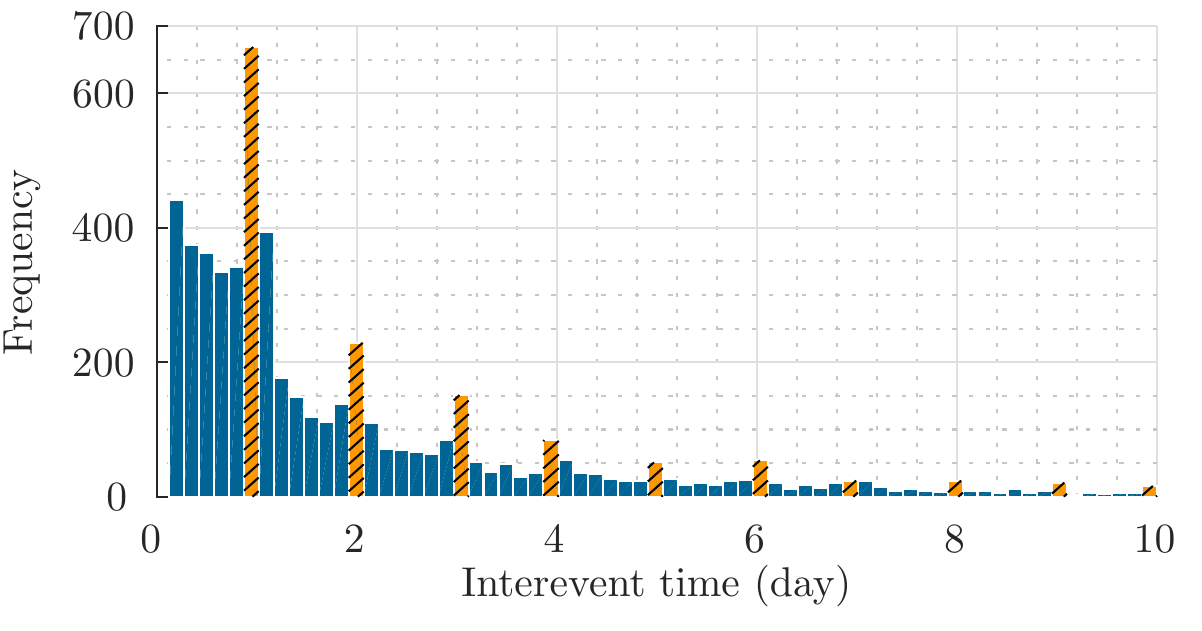}
	\includegraphics[width=0.35\textwidth]{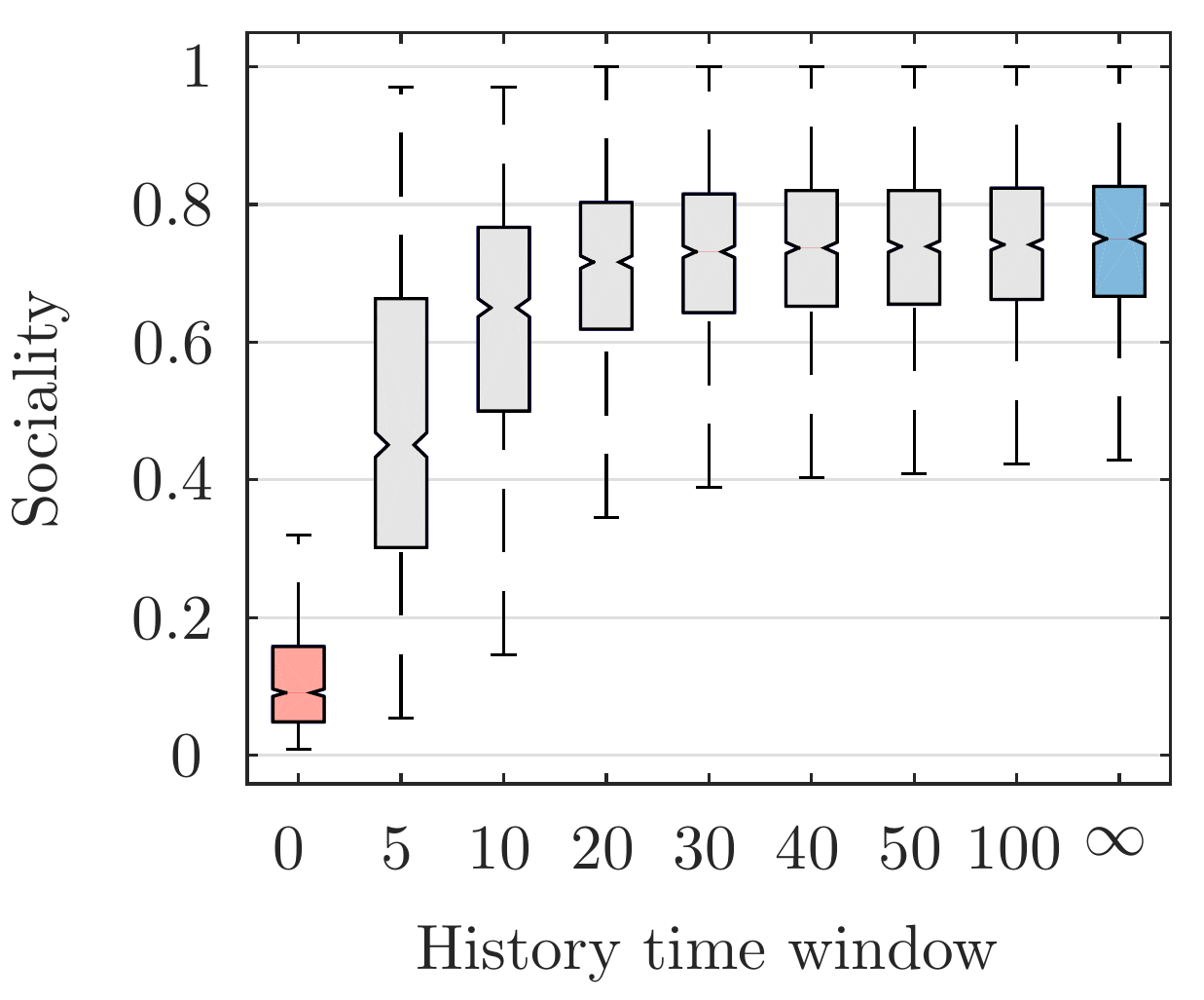}
	\caption{The frequency of interevent times in the Food category of Foursquare dataset (\textit{left}), and the Sociality box plot of users for different history window sizes (\textit{right}).}
	\label{fig:real-empirical}
\end{figure}

To study the effect of model parameters on the users' behavior, we design two experiments. First, we define a measure called \textit{Sociality}. For each user, the \textit{Sociality} is the percent of check-ins that their location has been previously visited by the user or her friends.
According to our spatial model, Eq. (\ref{equ:spt-spatio}), the exploration of users increase as we increase $\eta$ or decrease $\alpha$. To empirically validate this property of our model, in the right of Fig. \ref{fig:synth-spatial-lglk-mse} the box plot of the users' \textit{Sociality} is illustrated for different parameters. Its average reaches up to $80\%$ when the average ratio of spatial parameters, $\bar{\alpha}/\bar{\eta}$ is equal to $100$. It means that, users with high $\alpha/\eta$ are more affected by their friends. 
Moreover, to see the effect of temporal model parameters on the check-ins time of users, we plot the distribution of users' interevent time (the time difference between two successive events in a specific category for each user). According to Eq. (\ref{equ:spt-t-m}), parameters $\beta$ and $\mu$ regulate the periodicity in the time of events. The higher $\beta$, would result in more periodic events. We fix $\mu$ and set $\beta=0$ and $1$ in the left and right graphs of Fig. \ref{fig:synth-temporal-period}, respectively. As we see, there is a peak around $12$ in the right graph, which is the period of the simulated events but, in the left figure the frequency of events reduces exponentially and there is no peak except the initial one.
%-------------------
\subsection{Experiments on Real Data}
\subsubsection{Dataset preparation}
%{\textbf{Dataset preparation}}. 
We used both Twitter and Foursquare APIs to crawl the check-ins data of the users in Foursquare, because Foursquare does not provide the check-ins data. Specifically, we crawled the tweets of the users that have installed Swarm application. This app is connected to the Twitter and Foursquare account of the user. When a user check-ins, using this app, she can tweet the URL of that location in the Foursquare website. Therefore, we have access to the location details (via Foursquare API) and the time of check-ins (via Twitter API).
Using the Twitter search API we found active users with high check-ins rate in Foursquare. By querying the API with ``I am at'', the default template of Swarm app for check-ins, we selected the top $12000$ users, and crawled their tweets in ten weeks during the year $2015$.
We pruned the data by selecting $1000$ active users that were in the same country (Brazil), to better see the influence of users on each other.
%as it is necessary for the users to live in the same geographic area in order to have spatial influence on each other
The average degree of the network is $6.4$. The total number of check-ins is about $60000$.
%, \textit{i.e.}, average $60$ check-ins per user. 
The number of unique locations is about $10000$ in $10$ categories.

\subsubsection{Experimental setup}
%\vspace{3mm}
%\noindent{\textbf{Experimental setup}}. 
We use the first eight weeks of the check-ins as the train data, and the remaining two weeks as the test data. The hyper-parameters of the temporal model are set to $\tau=24$ and $\sigma=1$, by cross validation. 
We learn model parameters by the train data and use different temporal and spatial measures for the evaluations.
We compare our proposed model with \textit{MultiHawkes} \cite{yang2013mixture} where the intensity of user's check-ins is modeled by a multivariate Hawkes process (the intensity depends on the user and her friends' history), and \textit{Hawkes} where the intensity, that is modeled by a Hawkes process, depends only on the user's history.
The spatial model is also compared with two competing methods. In the \textit{MostPopular} method the most checked in locations, disregarding the time of check-ins, are more probable to be selected as the next check-in location. The \textit{PeriodicLoc} model assumes periodicity in the location of check-ins, the locations that are more checked in previous periods are more probable to be visited in the current time.

\begin{figure}[t]
	\centering
	\includegraphics[height=4mm]{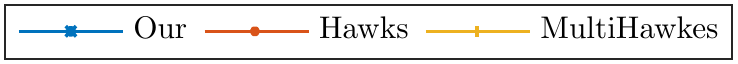}\\
	\vspace{-0.5mm}	
	\includegraphics[width=0.325\textwidth]{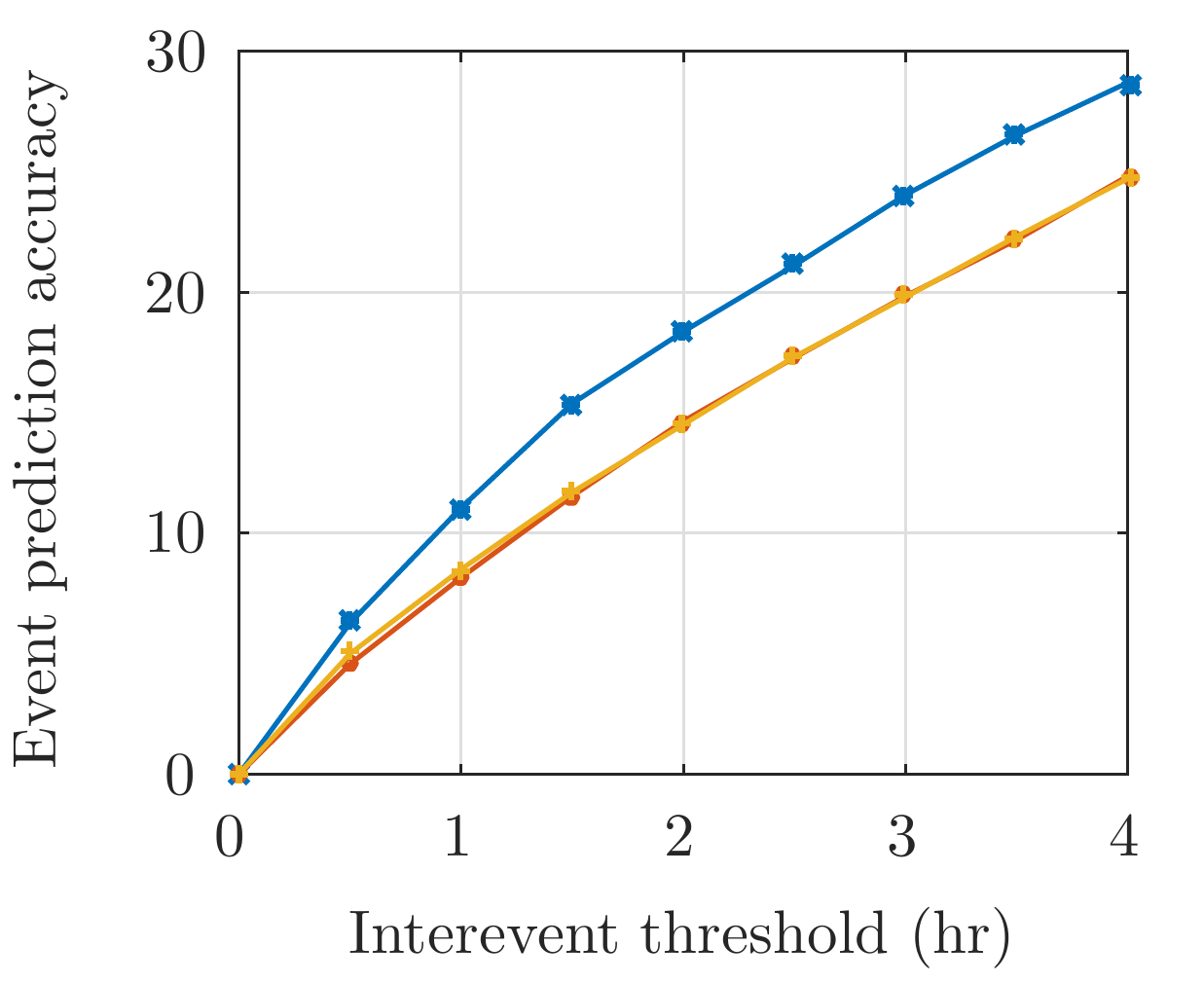}
	\includegraphics[width=0.325\textwidth]{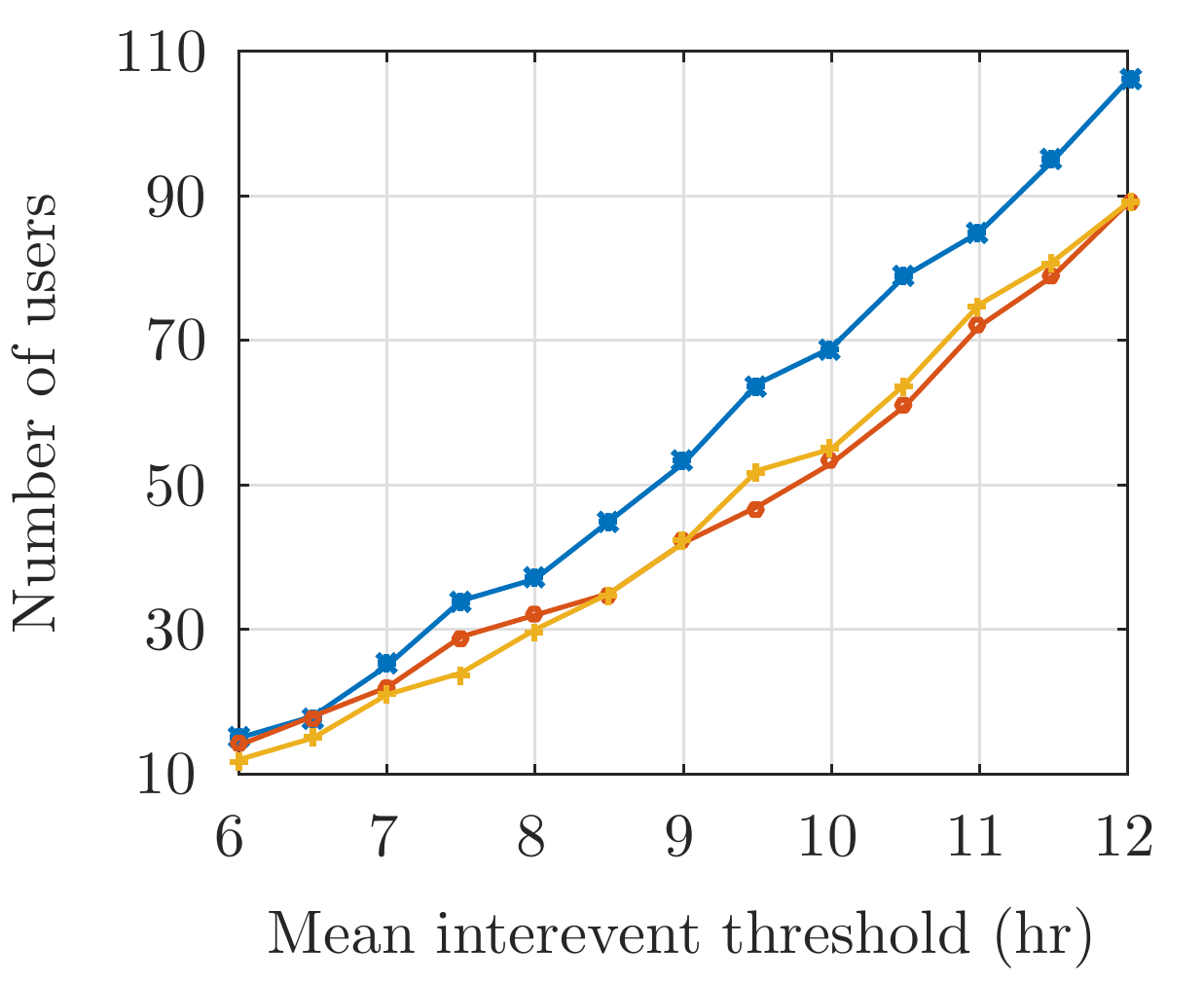}
	\includegraphics[width=0.325\textwidth]{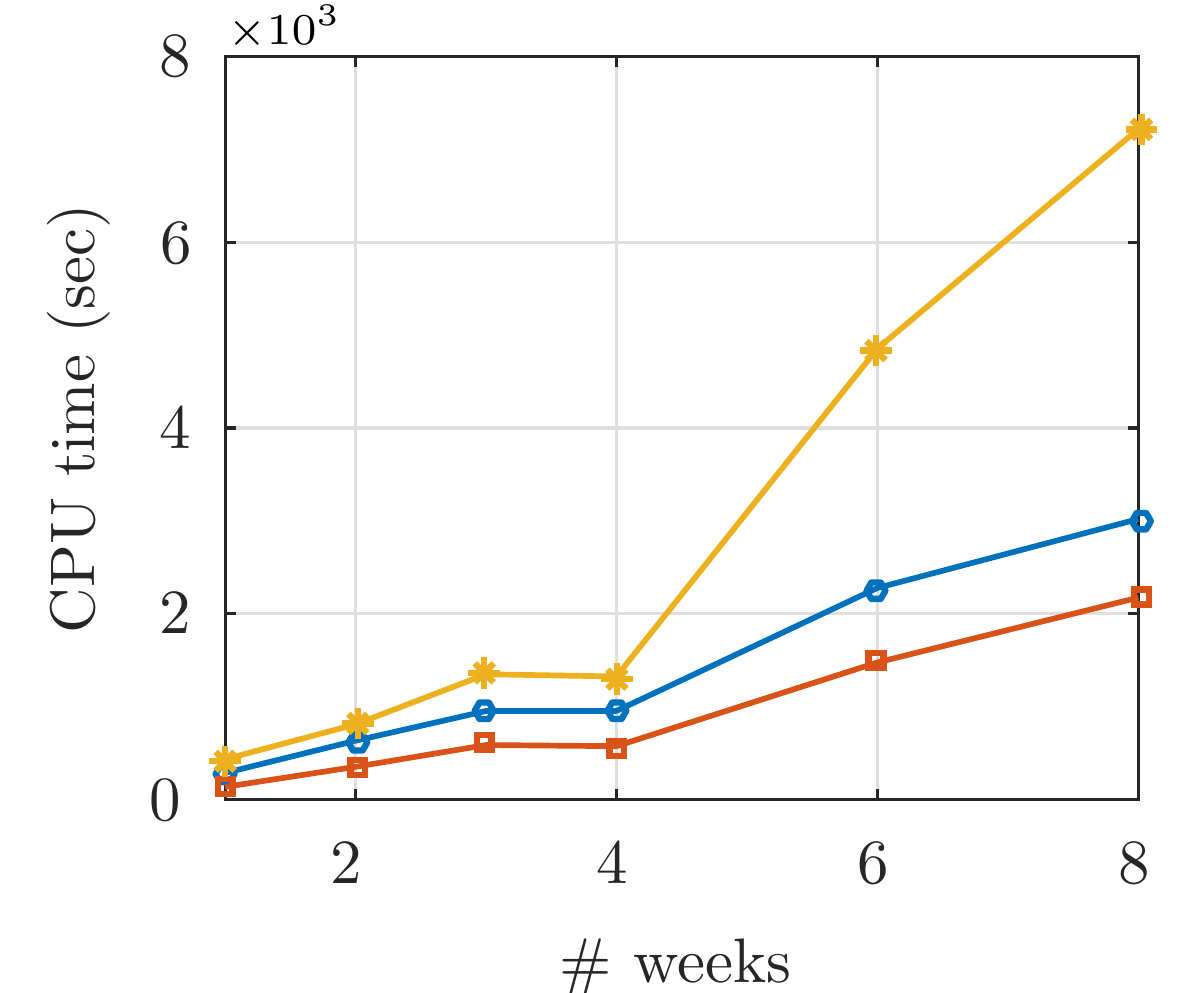}
	\caption{The percent of check-ins which their times are closer than a threshold to the real time (\textit{left}). The number of users which their average distance of predicted check-in times to the real times are less than a threshold (\textit{middle}). The time complexity of different temporal models (\textit{right}).}
	\label{fig:real-temporal}
\end{figure}
\begin{figure}[t]
	\centering
	\includegraphics[height=4.5mm]{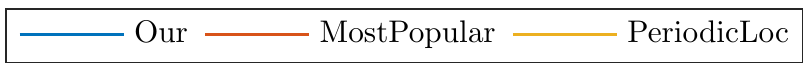}\\
	\vspace{-0.5mm}		
	\includegraphics[width=0.35\textwidth]{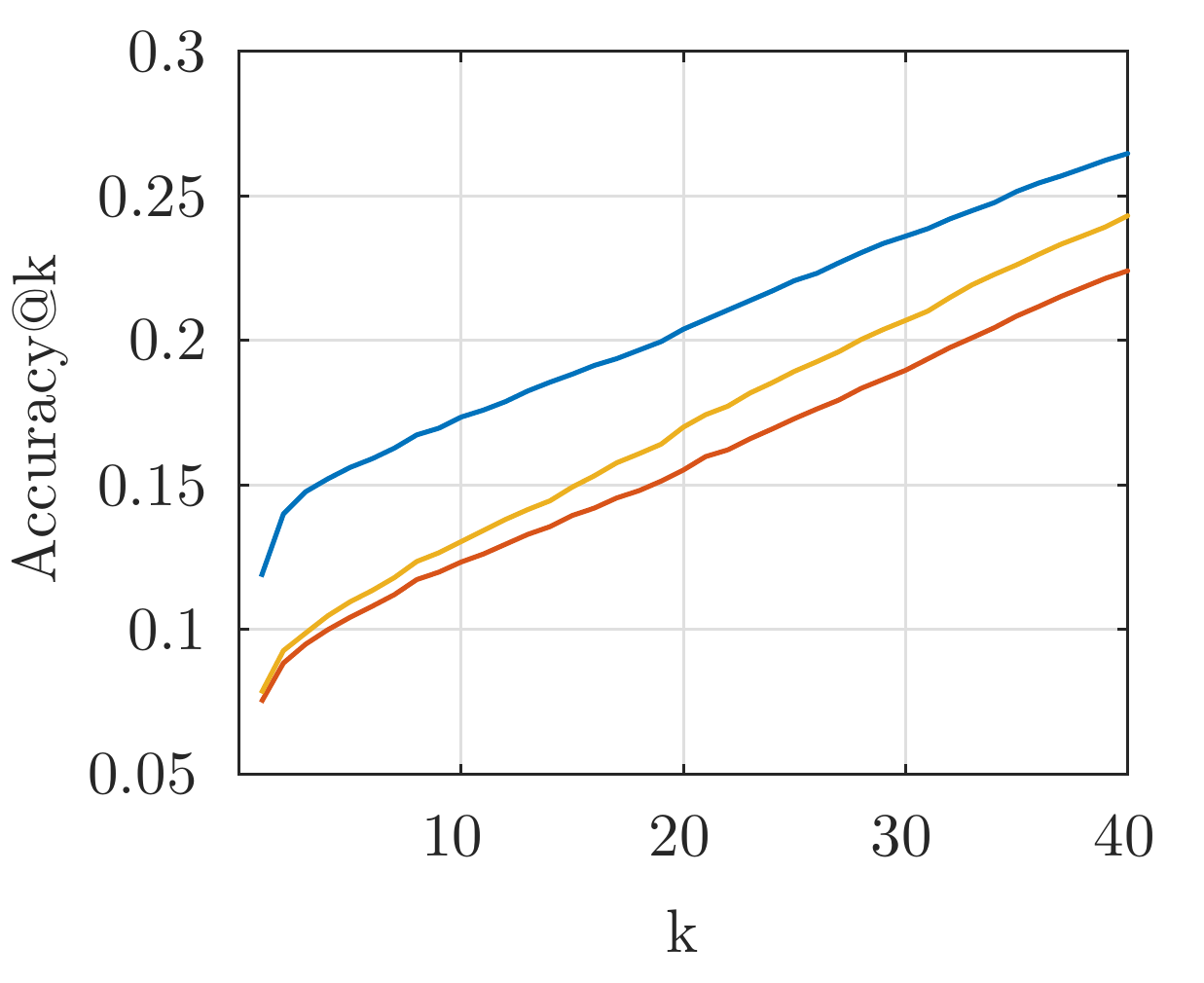}
	\hspace{0.5cm}
	\includegraphics[width=0.35\textwidth]{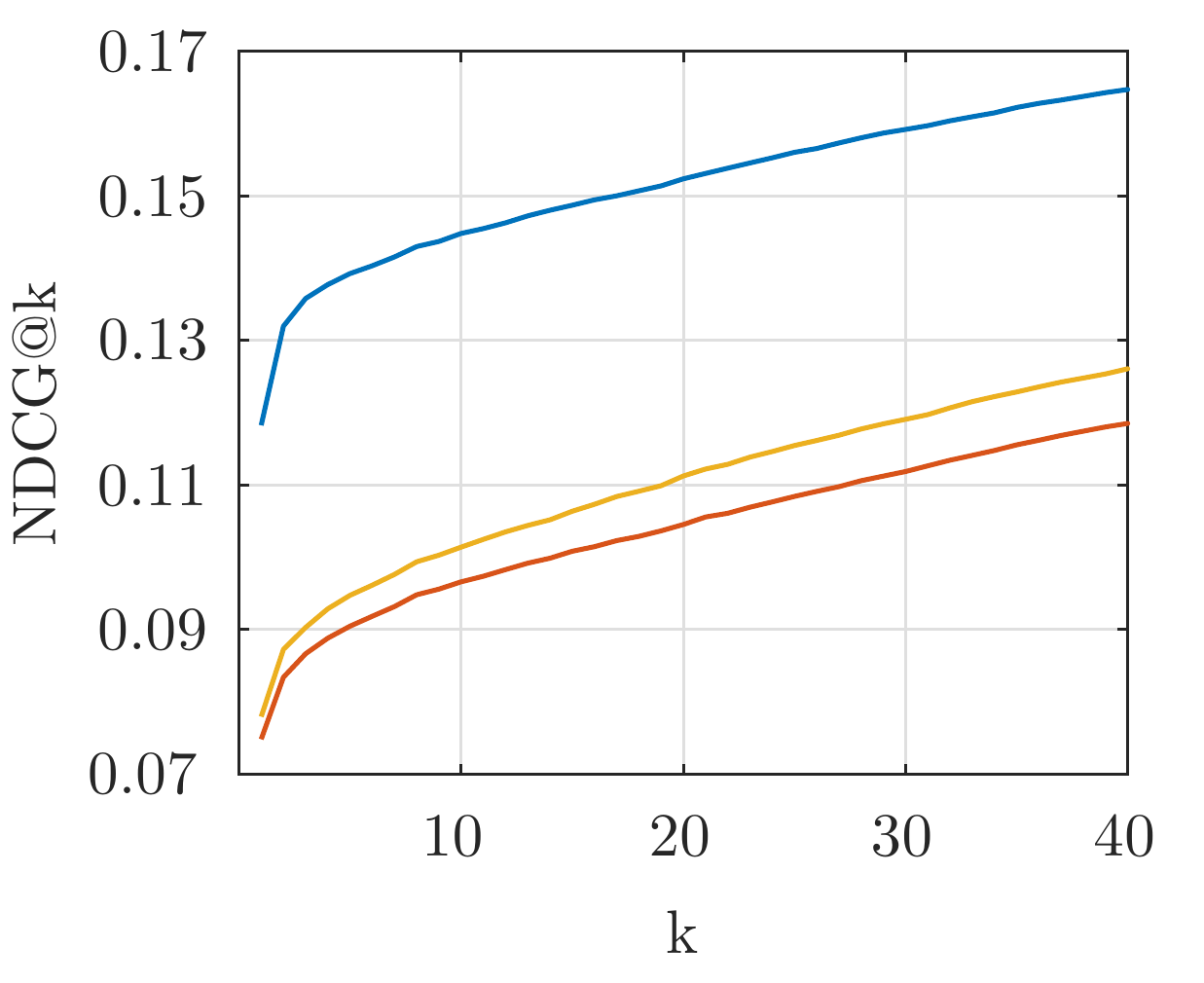}
	\caption{The accuracy (\textit{left}) and NDCG (\textit{right}) of location prediction, given the times of check-ins, at different values of $k$.}
	\label{fig:real-spatial}
\end{figure}

\subsubsection{Results}
%\vspace{3mm}
%\noindent{\textbf{Results}}.
To reveal the motivation of the proposed method, we perform two empirical experiments on the real data. In summary, Fig. \ref{fig:real-empirical} shows that: (i) most of the events are repeated after one, or more days (since there are peaks in the left graph at  $1,2,3,\ldots$), which verifies the use of a periodic point process for modeling the time of users' check-ins; (ii) about $80\%$ of users are affected by their friend's location of check-ins (blue box plot) which justifies the use of the proposed mutually-exciting spatial model; (iii) only $10\%$ of users explore new locations (red box plot), which these users are modeled by the parameter $\eta$ in Eq. (\ref{equ:spt-spatio}); (iv) as we more increase  the size of the history time window, the less \textit{Sociality} increases, which validates the use of the exponential decaying kernel in Eq. (\ref{equ:spt-weight-w}) to reduce the effect of far past history. 

To evaluate the prediction accuracy of check-in times we design two experiments. 
For each test event we estimate the time of the next event by different methods. The percent of check-ins which their times are closer than a threshold to the real time is plotted in the left graph of Fig. \ref{fig:real-temporal}. Our method achieved up to $35\%$ improvement for a $1$ hr threshold  compared to other methods. In the middle graph, the number of users where the average distance of their estimated events is less than a threshold is plotted. The proposed method performed up to $20\%$ better than the competing methods.
We did not plot this graph for the thresholds less than $6$ hr, where all methods perform poorly. Finally, the right graph of Fig. \ref{fig:real-temporal} shows that the time complexity of our method is near the fastest method.

Now, given the time of check-ins, we evaluate the prediction accuracy of the location of check-ins. For each test event, each method assigns a probability to each location, forming a vector and selects the most probable location. \textit{Accuracy}$@$\textit{k} is the percent of events that the true location is among the first $k$ high probable locations, and  \textit{NDCG}$@$\textit{k} is $\frac{1}{N}\sum_{i=1}^N {\mathbb{I}(1+r(e_i)<k)}/{\log_2(r(e_i))}$, where $r(e_i)$ is the (one-based) rank of the real location of $i$'th check-in in the location probability vector. These measures are plotted in Fig. \ref{fig:real-spatial}. For $k=1$ the accuracy increase from ${\scriptstyle\sim}7\%$ in other methods to ${\scriptstyle\sim}11\%$ in our method\footnote{It should be noted that there are about 10,000 locations and the random guess has extremely low accuracy.}---about $43\%$ improvement. 
For larger values of $k$ the measure is less reliable, since all method would have the same accuracy. 
Our method arrived at $24\%$ accuracy, and about $8\%$ improvement at $k=40$. But in the \textit{NDCG} which dose not have the mentioned undesirable effect (since the low-rank events are more significant) we see our method consistently outperform the others---about $30$-$50\%$ improvement for the different values of $k$.

\section{Conclusion}
% We proposed model
% but this model has weakness
% future works

To model the check-ins of users in location-based social networks, we proposed a periodic point process for the time of check-ins, which leverages the periodicity in users' behavior, and a multinomial distribution for the location of check-ins, which leverages the mutually-exciting effect of friends on the decision of users.

The synthetic experiments show the proposed inference algorithm can learn the model parameters with high accuracy and its performance increases by the size of train data. Moreover, we study the effect of model parameters on the users' check-ins, from which one can interpret the users' behavior in LSBNs from their inferred parameters.
The real experiments on the curated Foursquare check-ins dataset, show the proposed method outperform the other competing methods in the time and location prediction of users' future check-ins. 
Specifically, we achieved up to $35\%$ in the time prediction and $43\%$ in the location prediction accuracy. Furthermore, the empirical studies  show the real data meets the assumptions of the proposed model that is, users are periodic in the time and mutually-exciting in the location of their checkins.

Our work also opens many interesting venues for future works. For example, we can consider the home location of the users in defining the probability of the location of their check-ins, by modifying the weight of locations in Eq. (\ref{equ:spt-weight-w}). In addition, we can investigate the utilization of a non-parametric spatial model instead of the multinomial distribution. Finally, we can use the proposed model to control the check-in behavior of users by incentivization.

\bibliographystyle{ACM-Reference-Format}
\bibliography{ref}

\end{document}